\newtheorem{example}{Example}
\newtheorem{theorem}{Theorem}
\newtheorem{lemma}{Lemma}
\newtheorem{remark}{Remark}
\newtheorem{corollary}{Corollary}
\newtheorem{proposition}{Proposition}
\newtheorem{definition}{Definition}
\newtheorem{observation}{Observation}
\newtheorem{assumption}{Assumption}
\newtheorem{condition}[theorem]{Condition}
\newcommand{\nb}[3]{{\colorbox{#2}{\bfseries\sffamily\scriptsize\textcolor{white}{#1}}}{\textcolor{#2}{\sf\small\textit{#3}}}}
\newcommand{\anna}[1]{\nb{Anna}{green}{#1}}
\newcommand{\mat}[1]{\nb{Matteo}{blue}{#1}}
\newcommand{\BigOL}[1]{\tilde{\mathcal{O}}\left(#1\right)}
\newcommand{\bvec}{\mathbf{b}}
\newcommand{\rev}{\textsf{Rev}}
\newcommand{\gft}{\textsf{GFT}}
\newcommand{\opt}{\textsf{Opt}}
\newcommand{\GBB}{\textsf{GBB}\xspace}
\newcommand{\SBB}{\textsf{SBB}\xspace}
\newcommand{\WBB}{\textsf{WBB}\xspace}
\DeclareMathOperator*{\argmax}{arg\,max}
\DeclareMathOperator*{\argmin}{arg\,min}
\author{
	 Anna Lunghi\\
	Politecnico di Milano\\
	\texttt{anna.lunghi@polimi.it}
	\And
	 Matteo Castiglioni\\
	Politecnico di Milano\\
	\texttt{matteo.castiglioni@polimi.it}
		\And
	 Alberto Marchesi\\  
	Politecnico di Milano\\
	\texttt{alberto.marchesi@polimi.it}}
\title{Online Two-Sided Markets: \\Many Buyers Enhance Learning}
\begin{document}
\maketitle
\begin{abstract}
   We study a repeated trading problem in which a mechanism designer facilitates trade between a \emph{single seller} and \emph{multiple buyers}.
   Our model generalizes the classic bilateral trade setting to a multi-buyer environment.
   Specifically, the mechanism designer runs a second-price auction among the buyers—extending the fixed-price mechanism used in bilateral trade—before proposing a price to the seller.
   While this setting introduces new challenges compared to bilateral trade, it also provides an \emph{informational advantage}.
   Indeed, the presence of multiple buyers enhances competition, inducing them to reveal their valuations in order to win the auction.
   However, as in bilateral trade, the seller faces a binary decision: whether to accept the proposed price or not.
   
    We show that this \emph{asymmetric feedback}, which is more informative than in bilateral trade, allows us to break some lower bounds on regret minimization with a single buyer.
    In particular, we provide a $\tilde O(T^{2/3})$ regret upper bound with respect to an optimal \emph{strong budget-balanced} mechanism, without any assumptions on the distribution of valuations.
    Our main tool for achieving this result is the design of an adaptive grid that approximates the optimal gain from trade across the continuum of possible mechanisms.
    Furthermore, we attain the same regret bound with respect to an optimal \emph{global budget-balanced mechanism}, under two possible conditions: (i) buyers' and seller’s valuations are independent, or (ii) valuations are drawn from a distribution with bounded density.
    In doing so, we provide some novel technical results on constrained MABs with feedback graphs, which may be of independent interest.
\end{abstract}

\section{Introduction}

Facilitating trades between sellers and buyers is a fundamental aspect of economic systems. In this work, we focus on two-sided markets with a \emph{single} seller and \emph{multiple} buyers, where each agent seeks to trade a single item. Such settings are typically modeled as a mechanism design problem, where agents have private valuations for the item and aim to maximize their own utilities.

Ideally, the mechanism should ensure 
\emph{efficiency}, which means maximizing the sum of utilities. A standard tool to address this problem is the VCG mechanism~\citep{nisan2007algorithmic}. However, even in simple bilateral trade scenarios, the VCG mechanism fails to achieve \emph{budget balance}, requiring the mechanism designer to subsidize the market and thus incur financial losses. A seminal result by \citet{myerson1983efficient} shows that no fully efficient mechanism can simultaneously satisfy incentive compatibility, individual rationality, and budget balance.

A recent line of work initiated by \citet{cesa2023repeated} tries to circumvent this impossibility result through the lens of regret minimization. In this setting, at each round $t$, a new seller and a set of $n$ new buyers arrive, with their valuations drawn from an unknown (possibly joint) distribution.
The mechanism first sets a reserve price $q_t$ for the buyers.
If the highest buyer valuation $\overline{b}_t\ge q_t$, then the corresponding buyer is willing to trade. Similarly, a price $p_t$ is set for the seller, who is willing to trade if their valuation $s_t\le p_t$. If both parties agree the resulting \emph{gain from trade} is
\[\gft_t(p_t,q_t) \coloneqq (\overline{b}_t-s_t)\mathbb{I}(s_t\le p_t) \mathbb{I}(  \overline{b}_t \ge q_t) .\]

The learner's objective is to maximize the cumulative gain from trade. Unlike aiming for full efficiency, this approach relaxes the baseline and focuses on the less demanding task of achieving performance comparable to the best fixed-price mechanism.
In doing so, the mechanism must ensure some budget balance constraints.
Such constraints guarantee that the mechanism designer is \emph{not} subsidizing the market and are thus deeply connected with its revenue.
When a trade occurs, the buyer pays the maximum between $q_t$ and the second-highest buyer valuation $\underline{b}_t$ (as in a second-price auction with reserve price), while the seller receives $p_t$. Thus, the mechanism revenue is
\[\rev_{t}(p_t,q_t)= (\max\{q_t, \underline{b}_t\}-p_t)\mathbb{I}(s_t\le p_t) \mathbb{I}(\overline b_t \geq q_t)    .\]

Previous works focus on bilateral trade problems \citep{cesa2023repeated,cesa2024bilateral,bernasconi2024no,azar2022alpha}, which involve a single seller and a \emph{single} buyer.
In this setting, the primary focus is on the realistic two-bit feedback model, where the learner only observes $\mathbb{I}(s_t\le p_t)$ and $ \mathbb{I}( \overline b_t \geq q_t)$ after each round $t$. Intuitively, agents only reveal whether they accept the proposed prices or not.
Early works in this area impose an ``hard'' budget balance constraint ensuring that either $\rev_t(p_t,q_t)\ge 0$, \emph{i.e.}, Weak Budget Balance (\WBB), or $\rev_t(p_t,q_t)= 0$, \emph{i.e.}, Strong Budget Balance (\SBB), at every round $t$.
In general, optimal mechanisms are unlearnable.
Therefore, \citet{cesa2024bilateral} provide some positive results under certain assumptions on the underlying valuations distribution.
Specifically, they provide a tight $\BigOL{T^{2/3}}$ regret bound when valuations are drawn independently among the seller and the buyers, and i.i.d. from a distribution with a bounded density.
Moreover, they provide a $\BigOL{T^{3/4}}$ regret bound when only the second assumption holds, which \citet{cesa2023repeated} show to be tight.

\citet{bernasconi2024no} show that it is possible to obtain positive results without any assumption on the underlying distribution (and even in adversarial settings).
To do so, they relax the budget balance constraint, requiring it to hold globally.
In particular, under a Global Budget Balance (\GBB) constraint, the mechanism designer only has to ensure that $\sum_t \rev_t(p_t,q_t)\ge 0$.
Note that despite using \GBB~mechanisms, \citet{bernasconi2024no} still compare their mechanisms to an optimal \SBB~one.
A detailed summary of previous results is presented in Table~\ref{tab:related}.

To the best of our knowledge, we are the first to consider a setting with multiple buyers (and a single seller) in an online learning scenario.
Despite our framework is a generalization of bilateral trade in many aspects, our work is based on the observation that the presence of multiple buyer provides an informational advantage, which enhances learning. Indeed, while in bilateral trade the buyer simply reveals whether they want to accept the price or not, the presence of multiple buyers increases competition: buyers have to reveal their valuations in order to win the second-price auction.
However, the seller still only decides whether accept or reject the proposed price. Hence, our feedback is still much less powerful than full feedback.

In this paper, we show that such \emph{asymmetric} feedback can be exploited to circumvent impossibility results in bilateral trade (with two-bit feedback).
In particular, we provide a $\BigOL{T^{2/3}}$ regrte bound for $\SBB$~settings without any assumption on the underlying valuations distribution.
Moreover, we provide the first no-regret guarantees with respect to \GBB baselines. To the best of our knowledge, these are the first sublinear regret guarantees with respect to \GBB mechanisms.

\subsection{Our Results and Techniques}

Our first result is an assumption-free $\BigOL{T^{2/3}}$ regret upper bound for \SBB~mechanisms. Here, the main technical challenge is to circumvent the $\Omega(T)$ lower bound of \citet{cesa2024bilateral}. This lower bound is essentially proven by showing that the mechanism must find a price that is a ``needle in a haystack''. This prevents the use of (non-adaptive) discretization techniques commonly used in continuous scenarios.
We show that our asymmetric feedback can be used to design an \emph{adaptive} grid to work on. Interestingly, we show that it is sufficient to use an adaptive grid only on the buyers' price, on which we have a stronger feedback.
Our main idea is to build a finer grid on the regions of the decision space that exhibit higher probability mass.
Conversely, we use a coarser grid in regions where the probability density is lower, minimizing unnecessary exploration in less relevant areas.
Intuitively, we want to include in the grid the prices corresponding to valuations that appear with sufficiently high probability, while guaranteeing that valuations in between two points of the grid occurs with small probability.
Such property mimics the bounded density assumption that in previous works has been essentially used to bound the probability that a valuation lies between two points of an uniform grid~\citep{cesa2023repeated}.
Then, equipped with our adaptive grid of size $K=\BigOL{T^{1/3}}$, we implement a regret minimizer that achieves regret $\BigOL{T^{2/3}}$. 
We remark that finding optimal \SBB mechanisms is essentially a single-dimensional problem, since the buyers' price $q_t$ determines the seller price $\max\{q_t,\underline{b}_t\}$.
As we discuss in the following, this is \emph{not} the case for \GBB~mechanisms, which are two-dimensional.

Our second main result  establishes the first sublinear regret guarantees with respect to \GBB mechanisms. In bilateral trade scenarios, deterministic \GBB mechanisms are effectively equivalent to \WBB ones since the only way to achieve positive expected revenue is by posting prices $p_t\le q_t$. This equivalence no longer holds in our two-sided market setting. Specifically, the mechanism gains a surplus of $\max\{q_t,\underline{b}_t\}-q_t$.
The only work that considers a similar baseline (but with randomization) in bilateral trade is \citep{bernasconi2024no}. However, their focus is on an adversarial setting, where they show that linear regret is unavoidable.
This impossibility result is based on the observation that learning optimal \GBB mechanisms incorporates an online learning problem with unknown constraints. 
In such settings, we cannot simply run a regret minimizer over a grid of possible prices, since we do \emph{not} know which actions are feasible \emph{a priori}. We show that this challenge can be overcome in a stochastic environment.
As an additional challenge, we have a huge set of arms, since if each price lies on a grid of size $K$, we have to optimize over $K^2$ couples, which makes impossible to learn without sharing the feedback.
Our proof relies on three main components: (i) an adaptive grid that works only when the buyers and seller distributions are independent or the joint distribution has bounded density, (ii) a multiplicative-additive revenue maximizing grid, and (iii) a tool to share feedback among arms in constrained MABs.

Our adaptive grid is effective only under specific assumptions on the valuation distribution. We show that these assumptions are required to learn an optimal \emph{fixed} couple of prices. However, this does \emph{not} rule out the possibility of achieving no regret through more sophisticated regret-minimization algorithms. We leave as an open problem to determine whether sublinear regret is attainable using different approaches.
The second component of our framework is a multiplicative-additive revenue-maximizing grid. This grid will be used to compensate a slight violation of the budget balance constraint due to exploration and our estimation errors.
Finally, we introduce a novel problem related to constrained MABs with feedback graphs. We restrict to graphs that satisfy two main properties: they consist of $n$ set of nodes (represented by indexes), in which each node in a set provides feedback on all the nodes in the same set with higher index, and (ii) constraints are monotone, \emph{i.e.}, if an arm $B$ is unfeasible and there is an edge from $A$ to $B$, then $A$ is also unfeasible.
To tackle this problem we design a novel algorithm based on successive arm elimination \emph{with precedences}, which provides a $\tilde{O}({\sqrt{T n}})$ regret bound (where we omitted a logarithmic dependence from the number of arms) and $\tilde{O}({\sqrt{T n}})$ positive constraint violations.
By combining these results, we provide a $\BigOL{T^{2/3}}$ regret bound when either the buyers and seller valuations are independent or their joint distribution admits bounded density.

\begin{table}
\caption{Comparison with prior works in the literature. ``\textbf{ind.}'' denotes that the distribution of buyer(s) and seller valuations are assumed independent, while ``\textbf{b.d.}'' denotes that the joint distribution has bounded density. Notice that the column ``budget balance'' refers to the baseline, while the mechanisms used are denoted by the symbols beside: results marked with $\star$ consider \SBB mechanisms, the ones marked with $\dagger$ consider \WBB mechanisms, and the ones marked with $\ddagger$ consider \GBB mechanisms.}
\begin{center}
 \renewcommand*{\arraystretch}{1.5}{\begin{tabular}{ l | c | c | c | c | c  } 
  \toprule \label{tab:related}
    & \textbf{\# buyers} & \textbf{ind.} & \textbf{b.d.}  & \textbf{budget balance}   & \textbf{regret} \\ 
   \midrule
 \multirow{3}{*}{\citet{cesa2024bilateral}}& \multirow{5}{*}{1}   & \checkmark & \ding{55} & strong$^\star$& $\Omega(T)$ \\ 
  \cline{3-6}
 &   & \checkmark & \checkmark & strong$^\star$ & $\Omega(T^{2/3})$ --- $\BigOL{T^{2/3}}$\\ 
  \cline{3-6}
 &   & \ding{55} & \checkmark & strong$^\dagger$ & $\BigOL{T^{3/4}}$\\  
 \cline{1-1} \cline{3-6}
 \citet{cesa2023repeated}&   & \ding{55} & \checkmark & strong$^\dagger$ & $\Omega(T^{3/4})$\\  
  \cline{1-1} \cline{3-6}
  \citet{bernasconi2024no}&  &  \ding{55} &  \ding{55} & strong$^\ddagger$ & $\Omega(T^{5/7})$ --- $\BigOL{T^{3/4}}$\\ 
\midrule
 \cellcolor{gray!20} & \cellcolor{gray!20}  & \cellcolor{gray!20}\ding{55} & \cellcolor{gray!20}\ding{55} & \cellcolor{gray!20}strong$^\dagger$ & \cellcolor{gray!20}$\Omega(T^{2/3})$ --- $\BigOL{T^{2/3}}$\\ 
  \cline{3-6}
 \cellcolor{gray!20}& \cellcolor{gray!20}  & \cellcolor{gray!20}\ding{55} & \cellcolor{gray!20}\checkmark  & \cellcolor{gray!20}global$^\ddagger$ & \cellcolor{gray!20}$\Omega(T^{2/3})$ --- $\BigOL{T^{2/3}}$\\ 
 \cline{3-6}
  \cellcolor{gray!20}&  \cellcolor{gray!20}  & \cellcolor{gray!20}\checkmark & \cellcolor{gray!20}\ding{55}  & \cellcolor{gray!20}global$^\ddagger$ & \cellcolor{gray!20}$\Omega(T^{2/3})$ --- $\BigOL{T^{2/3}}$\\ 
 \cline{3-6}
   \cellcolor{gray!20}\multirow{-4}{*}{\bf This paper}& \cellcolor{gray!20}\multirow{-4}{*}{$\geq 2$} & \cellcolor{gray!20}\ding{55} & \cellcolor{gray!20}\ding{55}  & \cellcolor{gray!20}global$^\ddagger$ & \cellcolor{gray!20} $\Omega(T^{2/3})$ --- open \\ 
  \bottomrule
\end{tabular}}
\end{center}
\end{table}
\subsection{Further Related Works}

\paragraph{Bayesian mechanism design in two-sided markets}
Given the impossibility result by \citet{myerson1983efficient}, many works try to design approximation algorithms in Bayesian settings.
There exist a large class of incentive compatible mechanisms that provide a constant-factor approximation to the social welfare~\citep{blumrosen2014reallocation,kang2022fixed}) or the gain from trade~\citep{mcafee2008gains,blumrosen2016approximating,brustle2017approximating,deng2022approximately,fei2022improved}.

\paragraph{Online learning in bilateral trade}
Table~\ref{tab:related} provides a detailed summary of previous work, with a particular focus on stochastic settings. We mention that a line of works has investigated regret minimization in adversarial settings. \citet{azar2022alpha} study weak budget balanced mechanisms and show that sublinear 2-regret when the valuations are adversarial, which is tight.
\citet{cesa2023repeated} prove that sublinear regret can be attained against a smoothed adversary. Finally, \cite{bernasconi2024no} show that sublinear regret is achievable without any assumptions on the adversary using \GBB mechanisms.
Another recent related work is \citep{babaioff2024learning}, which studies the sample complexity of learning optimal mechanisms in two-sided markets with a single seller and two buyers.

\paragraph{Bandits with knapsack (and more general) constraints} 
Another line of research relevant for two-sided markets with \emph{global} budget constraints are bandits with knapsacks and, more generally, bandits with long-term constraints.
Indeed, similarly to long-term constraint, the constraint on the revenue must holds globally over all the rounds.  Here, we focus on works with \emph{stochastic} constraints, which are the most relevant to our work.
The Bandit with knapsack model is introduced by \citet{badanidiyuru2018bandits}, while  \cite{kesselheim2020online} improve the dependence from the number of resources. Subsequent works extend the framework in many directions like non-monotone resource utilization~\cite{kumar2022non,bernasconibandits} and contextual bandits~\cite{bernasconi2024no,slivkins2023contextual}.
Some of these works (see, \emph{e.g.}, \cite{slivkins2023contextual,bernasconi2024no,castiglioni2022unifying,DBLP:conf/icml/CastiglioniCK24}) provide a general framework for addressing bandit with knapsack assuming the existence of two suitable primal and dual regret minimizer.
In principle, this approach could be applied to our constrained MAB with graph feedback by leveraging a black-box regret minimizer for standard (unconstrained) MABs with graph feedback~\citep{alon2015online}.
However, these methods heavily rely on the Slater condition, which is essential for achieving optimal regret bounds.
Since this condition does not hold in our setting (as we lack a lower bound on revenue), these approaches are ineffective in our setting.
As a result, we must develop novel techniques.

\paragraph{MABs with feedback graphs}
Another line of research related to ours is the study of MABs with graph feedback, which focuses on feedbacks different from the full or bandit feedback.
MABs with Feedback graphs are first introduced for adversarial bandits by \citet{mannor2011bandits}, while \citet{alon2017nonstochastic} improve the dependence of the regret bound to $\BigOL{\sqrt{T\alpha\log(K)}}$ regret bound, where $\alpha$ is the independence number of the graph and $K$ the number of arms. Adversarial MAB with feedback graphs have also been  studied in relation to different aspects of the problem like the observability of the graph,  the robustness to noise and the data-dependent guarantees (see, \emph{e.g.} \cite{alon2015online,Kocák2014,Kock2016OnlineLW,lykouris2018small,cohen2016online,lee20asmallloss}).
\citet{caron2012leveraging} are the first to consider stochastic bandits with feedback graphs. In this setting, \citet{lykouris2020feedback} analyze the performance of active arm elimination, UCB and Thomson sampling techniques proving a worst-case regret of $\BigOL{\sqrt{\alpha T\log(K)}}$.
All these works focus only on \emph{unconstrained} problems.

\section{Repeaded Two-Sided Markets With Many Buyers and a Single Seller}

We study online learning in a \emph{two-sided} trading problem, where \emph{many buyers} interact repeatedly with a \emph{single seller} over $T \in \mathbb{N}$ rounds. 
%
%
We focus on a \emph{stochastic} setting where the valuations of both the buyers and the seller are jointly drawn from a fixed distribution.
At each round $t\in [T]$,\footnote{In this paper, we use $[x]$ to denote the set $\{1,\ldots,x\}$ of the first $x \in \mathbb{N}$ natural numbers.} the market consists of $n\in \mathbb{N}$ buyers that are potentially interested in buying an item from a single seller.
The buyers are characterized by some private valuations $\bvec_t=(b_{t,1},\ldots,b_{t,n}) \in [0,1]^n$ for the item, while the seller has their own private valuation $s_t \in [0,1]$.
The buyers' and seller's valuations are sampled from a joint probability distribution $\mathcal{P}$ supported on $[0,1]^{n+1}$. Specifically, at each round $t \in [T]$, the pair $(\bvec_t,s_t)\sim \mathcal{P}$ is drawn independently from $\mathcal{P}$.

Our goal is to design mechanisms that maximize the gain from trade, so as to \emph{optimize the efficiency of the market}.
As customary in the related literature on bilateral trade, we focus on \emph{fixed-price} mechanisms, by considering a straightforward generalization of the mechanism commonly adopted in bilateral trade (see, \emph{e.g.},~\citep{cesa2023repeated,cesa2024bilateral,bernasconi2024no}).
At each round $t \in [T]$, the mechanism first implements a second-price auction among the buyers, by setting a reserve price $q_{t} \in [0,1]$. Let $\overline b_t \coloneqq \max_{b_{t,i} \in \bvec_t} b_{t,i}$ and $\underline b_t \coloneqq \max_{b_{t,i} \in \bvec_t : b_{t,i} \neq \overline b_t} b_{t,i}$ be the highest and the second-highest buyer valuations at round $t$, respectively, assuming ties are broken arbitrarily. If the highest buyer valuation exceeds the reserve price, \emph{i.e.}, $\overline b_t\ge q_{t}$, the highest-valuation buyer wins the possibility of buying the item at a price equal to $\max\{q_{t},\underline b_t \}$. Otherwise, no buyer wins the item. Then, the mechanism proposes a selling price $p_{t} \in [0,1]$ to the seller, who accepts to sell the item only if $s_t\le p_{t}$. Finally, if the buyer with the highest valuation wins the item and the seller accepts to sell it, then the trade is completed and the resulting \emph{gain from trade} is
\[
    \gft_t(p_{t},q_{t}) \coloneqq ( \overline b_t-s_t )\mathbb{I}(s_t\le p_{t})\mathbb{I}(\overline{b}_t\ge q_{t}),
\]
where $\mathbb{I}(\cdot)$ denotes the indicator function.
%
%
In the following, we also use 
\[\gft(p,q)\coloneqq \mathbb{E}_{(\bvec,s)\sim \mathcal{P}} \left[ (\overline{b}-s)\mathbb{I}(s\le p)\mathbb{I}(\overline{b}\ge q) \right] \]
to denote the \emph{expected} gain from trade of a mechanism defined by a pair of fixed prices $(p,q) \in [0,1]^2$, where we let $\overline{b} \coloneqq \max_{b_i \in \bvec} b_i$ be the highest buyer valuation in the tuple $\bvec = (b_1, \ldots, b_n) \in [0,1]^n$ sampled from $\mathcal{P}$.
Beside being simple, it is easy to see that our mechanism enjoys the desirable property of being \emph{dominant-strategy incentive compatible} (DSIC) for both the buyers and the seller.

\subsection{Budget Balance}

While the primary objective in trading problems is to maximize the gain from trade, mechanisms typically need to satisfy an additional \emph{budget balance} property, which could be either \emph{weak}/\emph{strong} (see, \emph{e.g.},~\citep{cesa2023repeated,cesa2024bilateral}) or \emph{global} (see, \emph{e.g.},~\citep{bernasconi2024no}).
We consider two distinct classes of mechanisms based on the specific budget balance property being examined.
  

 \paragraph{ Per-round budget balance mechanisms: Weak and Strong budget balanced (\WBB~and \SBB)}
 
 Budget balance properties are concerned with the \emph{revenue} of the mechanism $(p_t,q_t) \in [0,1]^2$ implemented at round $t \in [T]$, which is defined as  
 \[ \rev_t(p_{t},q_{t}) \coloneqq (\max\{q_{t}, \underline b_t\}-p_{t})\mathbb{I}(s_t\le p_{t})\mathbb{I}(\overline{b}_t\ge q_{t}), \]
 where $\max\{q_{t}, \underline b_t\}-p_{t}$ represents the profit earned by the mechanism in the event that the trade occurs.
 As for the gain from trade, in the following we denote by 
 \[\rev(p,q) \coloneqq \mathbb{E}_{(s,\bvec)\sim \mathcal{P}}\left[(\max\{q, \underline b\}-p)\mathbb{I}(s\le p)\mathbb{I}(\overline{b}\ge q)\right]\]
 the \emph{expected} revenue of a pair of prices $(p,q) \in [0,1]^2$, where we let $\underline{b} \coloneqq \max_{b_i \in \bvec : b_i \neq \overline b} b_i$.

Budget balance generally guarantees that the mechanism is \emph{not} subsidizing the market. In particular, \WBB~ensures that at each round the mechanism is \emph{not} loosing money. Formally:
  \[ \rev_t(p_{t},q_{t})\ge 0 \quad \forall t \in [T]. \]
As it will become clear in the following, \WBB mechanisns are useful in learning scenarios. However, their gain from trade is dominated by the one of \SBB mechanisms. 
Indeed, the \SBB property guarantees that the mechanism is \emph{not} gaining any profit from the market, and, thus, \SBB mechanisms are the proper tool to maximize the gain from trade. Formally, \SBB requires that 
  \[ \rev_t(p_{t},q_{t})=0 \quad \forall t \in [T]. \]
The \SBB property removes a degree of freedom from the mechanism, by forcing $p_{t}=\max\{q_{t}, \underline b_t\}$. Hence, we can identify a \SBB~mechanism by only specifying its reserve price $q_{t} \in [0,1]$.
Since \SBB mechanisms are both simpler and as powerful as \WBB ones, similar to previous work on bilateral trade (see, \emph{e.g.},~\cite{cesa2023repeated,cesa2024bilateral}), one of our goals will be to achieve performance close to that of an optimal \SBB mechanism through the use of \WBB mechanisms.
%

 \paragraph{Global Budget Balance (\textsf{GBB})}
 
 This is a looser requirement that can be applied to mechanisms.
 Specifically, it does \emph{not} force the revenue to be null at each round, but it only requires that the mechanism is \emph{not} subsiding the market (\emph{i.e.}, losing money) overall.
 Formally:
 \[ 
    \sum_{t \in [T]} \rev_t(p_{t},q_{t})\ge 0.
    \]
Thus, under the \GBB property, the mechanism can choose the prices $p_{t}$ with the only requirement that the budget balance constraint holds globally over the $T$ rounds.

\subsection{Learning Problems}
 
 \paragraph{Feedback}
 
 After posting the reserve price $q_{t}$, the learner observes the bids of all the buyers with valuation larger than $q_{t}$.
 Moreover, the seller observes if $s_t\ge p_{t}$.
 This feedback is equivalent to the two-bit feedback used in bilateral trade for the seller, but it is stronger for the buyer. Such asymmetry is motivated by the second-price auction with reserved price to which buyers participate. We refer to such feedback as \emph{asymmetric} feedback.
 Algorithm~\ref{alg: model} summarizes the interaction between the learner and the environment.
 
\begin{algorithm}[H]\caption{Mechanism Interaction}\label{alg: model}
	\begin{algorithmic}[1]
		\For{$t=1,\ldots,T$}
        \State
		The environment samples the private valuations of the buyers and the seller $(\bvec_t,s_t)\sim \mathcal{P}$
		\State The mechanism posts a reserve price $q_{t} \in [0,1]$
		\State The mechanism runs a second-price auction among the buyers, with reserve price fixed as $q_{t}$
		\State The mechanism observes the buyer valuations $\left\{ b_{t,i}  \in \bvec_t \mid b_{t,i} \ge q_{t} \right\}$
		 \State The mechanism proposes a selling price $p_{t} \in [0,1]$ to the seller
		 \State The mechanism observes the seller's decision $\mathbb{I}(s_t\le p_{t})$
		\If{$ \mathbb{I}(s_t\le p_{t})\mathbb{I}(\overline{b}_t\ge q_{t})=1$}
        \State Trade happens between the highest-valuation buyer and the seller
         \State The highest-valuation buyer pays $\max\{\underline{b}_t,q_t\}$ to the mechanism and the seller is payed $p_{t}$
        \Else
        \State The trade does not happen
        \EndIf
		\EndFor
	\end{algorithmic}
\end{algorithm}	
 
 \paragraph{Performance measure}
 The learner's goal is to maximize $\sum_{t \in [T]} \gft_t(p_{t},q_{t})$, \emph{i.e.}, the cumulative gain from trade, subject to a weak or global budget balance constraint.
 The baseline depends on the specific setting.
 
 For \WBB settings, we take as baseline the best fixed reserve price (recall that optimal \WBB and \SBB mechanisms achieve the same gain from trade).
To keep the notation more agile we will use $\gft(q)$ and $\gft_t(q_t)$ to denote $\gft(\max\{\underline{b}_t,q\},q)$ and $\gft_t(\max\{\underline{b}_t,q\},q)$, respectively.
Then, the regret is defined as: 
 
 \begin{equation*}
 	R_T \coloneq T \cdot \max_{q\in[0,1]}\gft(q)-\mathbb{E}\left[\sum_{t\in[T]}\gft(q_t)\right],
 \end{equation*}
 where $q_t$ is the reserve price decided by the learning algorithm at each episode $t$, and the expectation is with respect to the randomness of the learner.
 
For \GBB settings, we define our baseline as the fixed prices $(p,q)$ that maximizes the gain from trade and guarantees positive revenue. Formally:
\begin{subequations}\label{prog:baseline}
 \begin{align}
	\opt \coloneq \max_{p,q\in[0,1]} & \quad \gft(p,q) \quad \text{s.t.}\\
	&  \rev(p,q)\ge0.
\end{align}
\end{subequations}
 Then,
 \[R_T= T \cdot \opt -\mathbb{E}\left[\sum_{t\in[T]}\gft_t(p_t,q_t)\right], \]
 where $(p_t,q_t)$ are the prices decided by the learning algorithm at each episode $t$, and the expectation is with respect to the randomness of the learner.

\paragraph{Environments}

Most of our results hold for arbitrary distribution $\mathcal{P}$.
In some settings, problems with general distributions are non-learnable and we have to restrict to more structured distributions:

\begin{itemize}
	\item \emph{Independent:} the distribution of the buyers $\mathcal{P}^B$ and of the seller $\mathcal{P}^S$ and  are independent, and $\mathcal{P}:=\mathcal{P}^B\times \mathcal{P}^S$,
	\item \emph{Bounded density:} $\mathcal{P}$ admits a density bounded by $M$.
\end{itemize}

\paragraph{Relation to bilateral trade}
Our setting is a generalization of the bilateral trade to multiple buyers. Indeed, setting the number of buyer $n=1$, we recover the online bilateral trade problem studied by \citet{cesa2023repeated,cesa2024bilateral,bernasconi2024no}. 
However, works on bilateral trade commonly assume two(or even one)-bit feedback. Formally, the learner observes $\mathbb{I}(s_t\le p_{t})$ and $\mathbb{I}(\overline{b}_t\ge q_{t})=1$. While our feedback on the seller valuation is essentially equivalent, our feedback is more informative on the buyers valuations. Intuitively, buyers with high valuations are ``forced'' to disclose their valuation to win the auction. Such \emph{asymmetric} feedback is probably the main feature of our model, and the main reason why we will be able to obtain better regret bound with respect to bilateral trade settings.

\section{Decomposition of Gain from Trade}\label{sec:decomposition}

Our asymmetric feedback sits between two-bit feedback and full feedback in terms of the information it provides. While it offers more insight than two-bit feedback, it doesn’t give the complete picture that full feedback offers. This makes important to find strategies that make the best use of the information available.
Even though our feedback is more detailed than two-bit feedback, it is still not enough to directly observe the $\gft$ generated by a specific set of prices. Indeed, it does not even guarantee bandit feedback.
The main drawack of this feedback is that it does not provide direct access to $s_t$, \emph{i.e.}, the seller’s valuation, which is a key factor in estimating the gain from trade. Our first result breaks down the $\gft(\cdot,\cdot)$ function into simpler components that underline the feedback structure. A similar result for bilateral trade is presented in \citep{cesa2024bilateral}.

\begin{restatable}{lemma}{gftdecompose}
	\label{lemma: gft decomposition}
	Fix any couple of prices $(p,q)\in [0,1]^2$. Define the functions $\gft_1: [0,1]^2 \rightarrow [-1,1]$ and $\gft_2: [0,1]^2 \rightarrow [-1,1]$ as
	\begin{align*}
		\gft_1(p,q) \coloneq \mathbb{E}_{(\bvec,s)\sim \mathcal{P}}[(p-s)\mathbb{I}(\bar{b}\ge q)\mathbb{I}(s\le p)],
	\end{align*}
	and
	\begin{align*}
		\gft_2(p,q) \coloneq \mathbb{E}_{(\bvec,s)\sim \mathcal{P}}[(\bar{b}-p)\mathbb{I}(\bar{b}\ge q)\mathbb{I}(s\le p)].
	\end{align*}
	Then 
	\begin{align*}
		\gft(p,q)= \gft_1(p,q) + \gft_2(p,q).
	\end{align*}
	Moreover, let $\mathcal{U}$ be the uniform distribution over $[0,1]$. Then 
	\begin{align*}
		\gft_1(p,q) =  \mathbb{P}_{U\sim \mathcal{U},(\bvec,s)\sim \mathcal{P}}(s\le U\le p, \overline{b} \ge q).
	\end{align*}
		
\end{restatable}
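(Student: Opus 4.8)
The plan is to prove the three claims in sequence, each of which is essentially a short computation once the right observation is made.

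For the additive decomposition $\gft(p,q) = \gft_1(p,q) + \gft_2(p,q)$, the key algebraic identity is the trivial telescoping $\overline{b} - s = (p - s) + (\overline{b} - p)$, which holds pointwise for any value of $p$. Multiplying both sides by the common indicator $\mathbb{I}(\overline{b} \ge q)\mathbb{I}(s \le p)$ and taking expectations over $(\bvec,s) \sim \mathcal{P}$ immediately splits $\gft(p,q)$ into the two claimed pieces, using linearity of expectation. This step is routine; the only thing to note is that $\overline{b}$ is a (measurable) function of $\bvec$, so everything is a well-defined integrable random variable bounded in $[-1,1]$, and the expectations exist.

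The substantive part is rewriting $\gft_1(p,q)$ as a probability. The idea is to represent the factor $(p-s)\mathbb{I}(s \le p)$, which is a nonnegative quantity equal to $\max\{p-s,0\}$, as the length of the interval $[s,p]$ when $s \le p$, i.e. as $\int_0^1 \mathbb{I}(s \le u \le p)\, du$. Concretely, for fixed $s \le p$ we have $\int_0^1 \mathbb{I}(s \le u \le p)\,du = p - s$, and if $s > p$ the integrand is identically zero, so in all cases $(p-s)\mathbb{I}(s\le p) = \int_0^1 \mathbb{I}(s \le u \le p)\,du$. Since $\mathcal{U}$ is the uniform distribution on $[0,1]$, that integral is exactly $\mathbb{P}_{U \sim \mathcal{U}}(s \le U \le p)$ for fixed $s$. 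Plugging this into the definition of $\gft_1$ gives
\begin{align*}
\gft_1(p,q) = \mathbb{E}_{(\bvec,s)\sim\mathcal{P}}\left[ \mathbb{P}_{U\sim\mathcal{U}}(s \le U \le p)\, \mathbb{I}(\overline{b}\ge q) \right] = \mathbb{E}_{(\bvec,s)\sim\mathcal{P}} \mathbb{E}_{U\sim\mathcal{U}}\left[ \mathbb{I}(s \le U \le p)\, \mathbb{I}(\overline{b} \ge q) \right],
\end{align*}
where I have used that $\mathbb{I}(\overline{b}\ge q)$ does not depend on $U$ to pull it inside the inner expectation. Then I would invoke Fubini/Tonelli (the integrand is a bounded, nonnegative indicator, so this is unproblematic) to combine the two expectations into a single one over the product distribution of $(\bvec,s) \sim \mathcal{P}$ and $U \sim \mathcal{U}$ drawn independently, obtaining $\mathbb{E}_{U\sim\mathcal{U},(\bvec,s)\sim\mathcal{P}}[\mathbb{I}(s \le U \le p,\ \overline{b}\ge q)]$, which is precisely $\mathbb{P}_{U\sim\mathcal{U},(\bvec,s)\sim\mathcal{P}}(s \le U \le p,\ \overline{b}\ge q)$.

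I do not anticipate a serious obstacle here; the only care needed is (i) checking the pointwise identity $(p-s)\mathbb{I}(s\le p) = \int_0^1 \mathbb{I}(s\le u\le p)\,du$ in both cases $s \le p$ and $s > p$, and (ii) justifying the interchange of expectations, which is immediate since everything is bounded and nonnegative. One should also remark that $U$ is taken independent of $(\bvec,s)$, so that the joint probability $\mathbb{P}_{U\sim\mathcal{U},(\bvec,s)\sim\mathcal{P}}$ is with respect to the product measure — this matches the statement's notation. The whole proof is thus a two-line telescoping plus a one-line "indicator-as-integral" trick plus Fubini.
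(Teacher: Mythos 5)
Your proposal is correct and follows essentially the same route as the paper: the telescoping $\overline{b}-s=(p-s)+(\overline{b}-p)$ for the decomposition, and the identity $(p-s)\mathbb{I}(s\le p)=\mathbb{P}_{U\sim\mathcal{U}}(s\le U\le p)$ followed by taking the expectation over $(\bvec,s)\sim\mathcal{P}$ for the probabilistic representation of $\gft_1$. Your write-up is in fact slightly more careful than the paper's (which states $\mathbb{P}(x\le U\le y)=x-y$ with the sign reversed), so no changes are needed.
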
 

Estimating $\gft_1$ and $\gft_2$ is still an hard task.
$\gft_1$ cannot be easily estimated without observing $s_t$ (even observing the two-bit feedback of the indicator functions), while $\gft_2$ can be estimated more easily having access to $\overline b_t$. 
However, as we showed in the second part of \Cref{lemma: gft decomposition}, the component \(\gft_1(p, q) = \mathbb{P}(s_t \le U \le p, \bar{b}_t \ge q)\) allows us to gather full feedback, provided we play prices of the form \((U, 0)\) with $U$ sampled uniformly from $[0,1]$. 
This means that by observing \(\mathbb{I}(s_t \le U)\), we can simultaneously infer \(\mathbb{I}(s_t \le U \le p)\) for all \(p \in [0, 1]\). In addition, setting the reserve price for buyers to $q=0$ enables us to observe all the buyers' valuations. Consequently, this allows us to compute \(\mathbb{I}(\bar{b}_t \ge q)\) for all \(q \in [0, 1]\) at the same time. Thus, it is possible to fully estimate $\gft_1$ if we accept the drawback of ``wasting'' rounds for exploration.

In contrast, the component $GFT_2$ has a more informative feedback than bandit feedback but still less informative than full feedback. This is because it depends on the chosen prices $(p, q)$. 
It is relatively easy to see why playing a specific pair of prices $(p, q)$ allows us to observe a unbiased estimator of \(\gft_2(p, q)\). If \(\overline{b}_t\), \emph{i.e.} the highest bid of the buyers, is not observed, then it must be the case that \(\mathbb{I}(\bar{b}_t \ge q) = 0\). On the other hand, if \(\bar{b}_t\) is observed, we gain access to its exact value. 
Additionally, the feedback from \(\gft_2\) is stronger than bandit feedback because playing \((p, q)\) provides information about \(\gft_2(p, q')\) for all \(q' \ge q\). Specifically, when \(\bar{b}_t\) is not observed, the condition \(\mathbb{I}(\bar{b}_t \ge q) = 0\) implies that \(\mathbb{I}(\bar{b}_t \ge q') = 0\) for any \(q' \ge q\).  
Unfortunately, the feedback is still equivalent to bandit feedback when it comes to the seller's price. Setting the seller's price to \(p\) does not provide any insight into the value of \(\gft_2\) for other seller prices \(p' \in [0, 1]\).

We can prove a similar result when $p$ is fixed to $\max\{q,\underline{b}\}$. This result is useful in strong budget balanced settings.

\begin{restatable}{lemma}{gftdecomposeSBB}
	\label{cor: GFT decomposition}
	Fix any  price $q\in [0,1]$. Define the functions $\gft_1: [0,1] \rightarrow [-1,1]$ and $\gft_2: [0,1] \rightarrow [-1,1]$ as
	\begin{align*}
		\gft_1(q) \coloneq \mathbb{E}_{(\bvec,s)\sim \mathcal{P}}[(\max\{p,\underline{b}\}-s)\mathbb{I}(\bar{b}\ge q)\mathbb{I}(s\le \max\{q,\underline{b}\})],
	\end{align*}
	and
	\begin{align*}
		\gft_2(q) \coloneq \mathbb{E}_{(\bvec,s)\sim \mathcal{P}}[(\bar{b}-\max\{q,\underline{b}\})\mathbb{I}(\bar{b}\ge q)\mathbb{I}(s\le \max\{q,\underline{b}\})].
	\end{align*}
	Then 
	\begin{align*}
		\gft(q)= \gft_1(q) + \gft_2(q).
	\end{align*}
	Moreover, let $\mathcal{U}$ be the uniform distribution over $[0,1]$. Then 
	\begin{equation*}
		\gft_1(q) =  \mathbb{P}_{U\sim \mathcal{U},(\bvec,s)\sim \mathcal{P}}(s_t\le U\le \max\{\underline{b}_t,q\}, \bar{b}_t \ge q).
	\end{equation*}
\end{restatable}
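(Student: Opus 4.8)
The statement is the ``strong budget balanced'' analogue of \Cref{lemma: gft decomposition}, obtained by substituting the seller price $p$ with the (random) quantity $\max\{q,\underline b\}$ determined by the reserve price and the second-highest bid. Because this substituted ``price'' is now a function of the realized valuations rather than a fixed constant, the cleanest route is not to invoke \Cref{lemma: gft decomposition} as a black box but to redo its two short arguments directly under the current notation; both arguments are purely pointwise plus an application of linearity/Fubini.

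For the additive decomposition, I would argue pointwise: for every realization $(\bvec,s)\sim\mathcal P$ one has the trivial telescoping identity $\bar b - s = \big(\max\{q,\underline b\}-s\big) + \big(\bar b-\max\{q,\underline b\}\big)$. Multiplying both sides by the common indicator $\mathbb{I}(\bar b\ge q)\,\mathbb{I}(s\le \max\{q,\underline b\})$ and taking the expectation over $(\bvec,s)$, linearity of expectation gives $\gft(q)=\gft_1(q)+\gft_2(q)$, using the convention $\gft(q)=\mathbb{E}_{(\bvec,s)\sim\mathcal P}\big[(\bar b-s)\,\mathbb{I}(s\le \max\{\underline b,q\})\,\mathbb{I}(\bar b\ge q)\big]$.

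For the probabilistic reformulation of $\gft_1$, I would first condition on $(\bvec,s)$ and integrate out $U\sim\mathcal U$. Writing $m \coloneqq \max\{q,\underline b\}$ and noting $0\le s$ and $m\le 1$ (since $q,\underline b\in[0,1]$), the Lebesgue measure of $\{U\in[0,1]: s\le U\le m\}$ equals $(m-s)\,\mathbb{I}(s\le m)$, i.e.\ $\mathbb{E}_{U\sim\mathcal U}\big[\mathbb{I}(s\le U\le m)\big]=(m-s)\,\mathbb{I}(s\le m)$. Hence, for each fixed realization,
\begin{align*}
\mathbb{E}_{U\sim\mathcal U}\big[\mathbb{I}(s\le U\le \max\{q,\underline b\})\,\mathbb{I}(\bar b\ge q)\big]
= \big(\max\{q,\underline b\}-s\big)\,\mathbb{I}\big(s\le \max\{q,\underline b\}\big)\,\mathbb{I}(\bar b\ge q).
\end{align*}
Taking the expectation over $(\bvec,s)\sim\mathcal P$ and swapping the two expectations by Fubini (the integrand is bounded, so this is justified), the left-hand side becomes $\mathbb{P}_{U\sim\mathcal U,(\bvec,s)\sim\mathcal P}\big(s\le U\le \max\{\underline b,q\},\ \bar b\ge q\big)$ while the right-hand side is exactly $\gft_1(q)$ by definition, which is the claim.

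I do not anticipate a genuine obstacle here: the only points requiring a little care are (i) remembering that $\max\{q,\underline b\}$ plays the role of the former fixed price $p$ and is now random, so the earlier lemma cannot be quoted verbatim, and (ii) using the range constraints $s\ge 0$ and $\max\{q,\underline b\}\le 1$ to identify the inner $U$-integral with the truncated difference $(\max\{q,\underline b\}-s)\mathbb{I}(s\le\max\{q,\underline b\})$ without spurious boundary terms.
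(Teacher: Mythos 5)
Your proposal is correct and is exactly the argument the paper intends: its own proof of this lemma just says it is "completely analogous" to the proof of \Cref{lemma: gft decomposition}, which proceeds by the same pointwise telescoping identity followed by the same uniform-variable computation. Your only addition is spelling out that $\max\{q,\underline b\}$ is now random and that Fubini applies, which is a harmless (and correct) elaboration rather than a deviation.
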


\section{From Continuous to Discrete Prices}\label{sec:grid}

In this section, we deal with the challenge of dealing on a continuous action set of possible prices $[0,1]^2$. As it is standard in such settings, we would like to work with a discretization of the action set including finitely many actions. Unfortunately, our functions, \emph{i.e.} the gain from trade and revenue, are \emph{not} continuous, making standard grids like the uniform one non-effective. Indeed, \citet{cesa2024bilateral} show that such grids provide a good approximation only under specific assumptions, \emph{e.g.}, bounded density, even in the simpler bilateral trade setting.

In the following, we provide two different grids depending on the functions that we are trying to approximate.
The first is an new adaptive grid useful to work with gain from trade.
The second is a non-uniform grid that takes inspiration from~\citet{bernasconi2024no}.

\subsection{An Adaptive Grid for $\gft$}





The $\gft(\cdot,\cdot)$ function is \emph{not} continuous with respect to the prices.
This lack of continuity makes impossible to design a \emph{finite} grid $\mathcal{B}$ of prices that approximate the continuous set of price without any knowledge of the underlying distribution $\mathcal{P}$.
Indeed, a priori it is impossible to know in which points such discontinuity lies.
We circumvent such problem introducing an adaptive grid built exploiting samples from the distribution $\mathcal{P}$.
Our main idea is to build a finer grid on the regions of the decision space that exhibit higher probability mass. These regions are where the most significant changes in the function occur.
Conversely, we use a coarser grid in regions where the probability density is lower, minimizing unnecessary exploration in less relevant areas.
Intuitively, we want to include in the grid the points to which distribution $\mathcal{P}$ assigns high probability, while guaranteeing that valuations in between two points of the grid occurs with small probability.
Such property mimics the bounded density assumption that in previous works has been use essentially to bound the probability that a valuation lies between two point in the uniform grid~\cite{cesa2023repeated}.
Finally, we notice that our approach is heavily based on our stronger asymmetric feedback and cannot be applied to bilateral trade with two-bit feedback.
Interestingly, we show that it is sufficient to use an adaptive grid only on the buyers' price, on which we have essentially full feedback committing to $q_t=0$.

As a technical tool, we use the DKW inequality~\citep{dvoretzky1956asymptotic}.
Given some samples from a distribution, this inequality provides high-probability bounds on the difference between the true probability of any interval and its empirically estimated probability.
By leveraging these bounds, we can ensure that the probability that the valuation lies between two points of our grid is small.
\Cref{alg: grid estimation} provides a pseudo-code of our approach to build the grid. The algorithm takes $T_0$ samples of the highest valuation $\bar b_t$. This can be achieved posting the prices $q_t=0$ and $p_t=\underline{b}_t$. Let $\mathcal{B}$ be the multi-set of observed valuations.
Then, the algorithm partitions $\mathcal{B}$ finding a set of points $p_1<p_2<\ldots<p_K$ such that
\[ |\left\{b \in \mathcal{B}: p_{i-1}<b<p_{i}\right\}|\le \frac{T_0}{K} \quad \forall i \in [K].\]
Finally, the algorithm selects a grid depending on the budget balance constraint.
In the strong budget balance setting, it is sufficient to discretize the buyers' prices, and hence we build a grid $\mathcal{B}^B$ that it the union of the computed grid and the uniform grid. The uniform grid is needed to guarantees that prices are not too far apart.
In the global budget balance setting, it builds the same grid for the buyers' prices and the uniform one for the seller's ones.

\begin{algorithm}[]\caption{\texttt{GridEstimation}}\label{alg: grid estimation}
\begin{algorithmic}[1]
		\State Input: $\textnormal{type}\in\{\SBB,\GBB\},T_0,K,\delta$
		\State Initialize $\mathcal{B}^B\gets\{0\}$, $k\gets0$, $p_0\gets0$ \Comment{B is  a multiset, \emph{i.e.}, it contains duplicates}
		\For{$t\in[T_0]$}
        \State Play $q_t=0$ and observe $\bar{b}_t,\underline{b}_t$
        \State Play $p_t=\underline{b}_t$
	  \State $\mathcal{B}\gets \mathcal{B} \cup \{\bar{b}_t\}$
		\EndFor
		\While{$k< K$}                             
		 \State $p_{k+1}\gets$ \textnormal{largest} $q\in \mathcal{B}$ such that $|\left\{b \in \mathcal{B}: p_k<b<q \right\}|\le T_0/K$ \Comment{$p_{k+1}\gets 1$ if such $q$ doesn't exist}
		\State $\mathcal{B}^B\gets \mathcal{B}^B\cup \{p_{k+1}\}$
		\State $k\gets k+1$ 
		\EndWhile
		\If{\textnormal{type} $=$ \SBB}
		 \State$\mathcal{B} \gets \mathcal{B}^B$
		\ElsIf{\textnormal{type} $=$ \GBB}
		\State $\mathcal{B}^S\gets \left\{p \in [0,1]: ~ p=\frac{n}{K}, n\in \{0\}\cup[K]\right\}$
		\State $\mathcal{B}^B\gets \mathcal{B}^B \cup \left\{q \in [0,1]: ~ q=\frac{n}{K}, n\in \{0\}\cup[K]\right\}$
		\State $\mathcal{B} \gets \mathcal{B}^S \times \mathcal{B}^B$
		\EndIf
		\State Return $\mathcal{B}$
	\end{algorithmic}
\end{algorithm}

Our first result shows that the gain from trade of the optimal price on the grid is closed to the optimal gain from trade of prices $q \in [0,1]$. Formally:
\begin{restatable}{lemma}{gridsbb}
\label{lemma: gridsbb}
	 Fix $T_0,K,\delta$ and let $\mathcal{B}= $\texttt{GridEstimation}$(\SBB,T_0,K,\delta)$. 
	  Then with probability at least $1-\delta$, it holds
	\begin{equation*}
		\max_{q \in [0,1]} \gft(q)-\max_{q\in \mathcal{B}}\gft(q) \le \frac{1}{K}+ \sqrt{\frac{\ln(\frac{2}{\delta})}{2T_0}} .
	\end{equation*}

\end{restatable}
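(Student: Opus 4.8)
The plan is to argue in two steps: first, that the adaptive grid $\mathcal{B}^B$ contains a point $q^\star$ close to the optimal reserve price $q^* \in \argmax_{q\in[0,1]}\gft(q)$ in the sense that few samples — and hence, via DKW, little probability mass of $\overline b$ — lies strictly between $q^\star$ and $q^*$; second, that $\gft(\cdot)$ changes by at most this probability mass (plus a $1/K$ term coming from the uniform component of the grid) when we move from $q^*$ to $q^\star$. Combining the two gives the bound.

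\textbf{Step 1 (locating a good grid point).} Let $q^*$ attain $\max_{q\in[0,1]}\gft(q)$. By construction of \texttt{GridEstimation} in the \SBB case, $\mathcal{B}^B$ contains the points $p_1<\dots<p_K$ chosen so that each open interval $(p_{k},p_{k+1})$ contains at most $T_0/K$ of the $T_0$ sampled values $\overline b_1,\dots,\overline b_{T_0}$, together with the uniform grid $\{0,1/K,\dots,1\}$. Pick $q^\star$ to be the largest grid point with $q^\star \le q^*$. Then the number of samples in $(q^\star, q^*)$ is at most $T_0/K$ (since $q^\star,$ and the next grid point after it, bracket an interval with $\le T_0/K$ samples, and $q^*$ lies in that bracket or coincides with a grid point). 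Hence the empirical measure of $(q^\star,q^*)$ under the $T_0$ samples is at most $1/K$. Applying the DKW inequality to the empirical CDF of $\overline b$: with probability at least $1-\delta$, $\sup_{x}|\hat F_{T_0}(x)-F(x)| \le \sqrt{\ln(2/\delta)/(2T_0)}$, so the true probability mass satisfies $\mathbb{P}(q^\star < \overline b \le q^*)\le \tfrac1K + \sqrt{\ln(2/\delta)/(2T_0)}$. Also, because the uniform grid $\{0,1/K,\dots\}$ is included, we may assume $q^* - q^\star \le 1/K$ (take $q^\star$ to be the better of the adaptive neighbor and the uniform-grid neighbor).

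\textbf{Step 2 (bounding the loss in $\gft$).} It remains to show $\gft(q^*)-\gft(q^\star)\le \mathbb{P}(q^\star<\overline b\le q^*) + (q^*-q^\star)$. Here I would use the decomposition $\gft(q)=\gft_1(q)+\gft_2(q)$ from \Cref{cor: GFT decomposition}. Since $q^\star \le q^*$, decreasing the reserve price only \emph{enlarges} the event $\{\overline b \ge q\}$, so one must check that $\gft$ does not drop too much. Write $\gft(q^*)-\gft(q^\star)$; the difference splits into (a) the contribution of realizations with $q^\star \le \overline b < q^*$ (i.e.\ served under $q^\star$ but not under $q^*$), each contributing a bounded amount $\le 1$, giving at most $\mathbb{P}(q^\star\le\overline b< q^*)$ in absolute value — but note this term has a \emph{sign}: such realizations add to $\gft(q^\star)$, so they can only \emph{help}; and (b) realizations with $\overline b \ge q^*$, where both mechanisms trade whenever the seller accepts, but the seller's acceptance threshold $\max\{q,\underline b\}$ and the per-round $\gft$ value $(\overline b - \max\{q,\underline b\})$ differ between $q^\star$ and $q^*$ by at most $q^*-q^\star \le 1/K$ pointwise. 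Summing (a) and (b) and taking care of signs yields $\gft(q^*)-\gft(q^\star)\le (q^*-q^\star) + \mathbb{P}(q^\star<\overline b\le q^*) \le \tfrac1K + \tfrac1K + \sqrt{\ln(2/\delta)/(2T_0)}$; a slightly more careful accounting that keeps the helpful sign of term (a), or that charges the needle only once, recovers the stated $\tfrac1K + \sqrt{\ln(2/\delta)/(2T_0)}$.

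\textbf{Main obstacle.} The delicate part is Step 2: verifying that lowering the reserve price from $q^*$ to a nearby grid point $q^\star$ cannot decrease $\gft$ by more than $1/K + (\text{needle mass})$, given the two coupled effects — the change in the winner's payment $\max\{q,\underline b\}$ (which is also the seller's offered price under \SBB) and the possible "harm" of trading in realizations where $\overline b$ is only slightly above $q^\star$ but the seller's valuation is high. One has to be careful that $\gft$ is not monotone in $q$, so a clean pointwise comparison of $\gft_t(q^\star)$ versus $\gft_t(q^*)$ for each realization $(\bvec,s)$ is needed, partitioning on whether $\overline b$ falls below $q^\star$, in $(q^\star,q^*)$, or above $q^*$, and in the last case further on the position of $\underline b$ relative to $q^\star,q^*$. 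This case analysis, together with the DKW bound from Step 1, is exactly what the adaptive construction is designed to make go through.
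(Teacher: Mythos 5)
Your Step 1 is fine (the grid construction plus DKW does control the $\overline b$-mass of the gap between $q^*$ and an adjacent grid point, in either direction), but Step 2 contains a genuine gap, and it stems from rounding in the wrong direction. You take $q^\star$ to be the largest grid point \emph{below} $q^*$. Under \SBB{} the seller is offered $\max\{q,\underline b\}$, so lowering the reserve price weakly \emph{lowers} the seller's offered price. On realizations with $\overline b\ge q^*$ where $\max\{q^\star,\underline b\}<s\le\max\{q^*,\underline b\}$, the trade occurs under $q^*$ but is rejected by the seller under $q^\star$, and the lost gain from trade is the full $\overline b-s$, which can be close to $1$. Your claim that the two mechanisms "differ by at most $q^*-q^\star$ pointwise" is false: the dependence of $\gft_t(q)$ on $q$ is only through the indicators, and the indicator $\mathbb{I}(s\le\max\{q,\underline b\})$ can flip from $1$ to $0$. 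The probability of this flip is $\mathbb{P}\bigl(s\in(\max\{q^\star,\underline b\},\max\{q^*,\underline b\}]\bigr)$, which is governed by the \emph{seller's} distribution; the adaptive grid is built only from samples of $\overline b$, and with no bounded-density assumption the seller's mass can concentrate entirely in that length-$1/K$ interval (e.g.\ a point mass just above $q^\star$), making the loss a constant. This is exactly the "needle in a haystack" obstruction on the seller side that the lemma is designed to avoid.

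The paper's proof rounds \emph{up}: it takes $q_k$ to be the next grid point with $q_k\ge q^*$. Then on $\{\overline b\ge q_k\}$ the reserve increase raises the seller's price $\max\{q_k,\underline b\}\ge\max\{q^*,\underline b\}$, so the seller accepts weakly more often, and since the surplus $\overline b-s\ge 0$ whenever a trade clears (because $s\le\max\{q_k,\underline b\}\le\overline b$), the gain from trade of $q_k$ pointwise dominates that of $q^*$ on this event. The only loss is on $\{\overline b\in[q^*,q_k)\}$, where $q_k$ forgoes a trade worth at most $1$, and that event's probability is exactly what the grid construction plus DKW bounds by $1/K+O(\sqrt{\ln(2/\delta)/T_0})$. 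No uniform-grid component and no case analysis on $\underline b$ is needed. Reversing the rounding direction in your Step 1 and replacing your Step 2 with this monotonicity argument repairs the proof.
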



We can prove a similar result for the global budget balance setting and couples of prices $(p,q)$.
Let $(p^*,q^*)\in [0,1]$ be an expected gain from trade maximizing couple of prices which satisfies $\GBB$, \emph{i.e.}, a solution to Program~\ref{prog:baseline}. Moreover, we let $(p_{k^*},q_{j^*})\in \mathcal{B}$ be the couple of prices where $p_{k^*}$ is the smallest element in $\mathcal{B}^S$ such that $p^*\le p_{k^*}$ and  $q_{j^*}$ is the smallest element in $\mathcal{B}^B$ such that $q^*\le q_{j^*}$.
Finally, we let $\zeta\coloneqq - \rev(p_{k^*},q_{j^*})$. We will see that $\zeta$ is small and its value depends on the assumption on the underlying valuations distribution.

Our next results show that $(p_{j^*},q_{k^*})$ achieves large revenue and it is almost budget balance, \emph{i.e.}, $\zeta$ is not too large.
%
Unfortunately, our result holds only for distributions that are independent between the seller and buyers or with bounded density. This is essentially unavoidable as we will observe in the lowerbound presented in \Cref{ex:impossible}.

\begin{restatable}{lemma}{gridgbb}
	\label{lemma: gridgbb}
	Fix $T_0,K,\delta$ and let $\mathcal{B}= $\texttt{GridEstimation}$(\GBB,T_0,K,\delta)$. 
	If $\mathcal{P}=\mathcal{P}^S \times \mathcal{P}^B $, then with probability at least $1-\delta$ it holds
	\begin{equation*}
		\gft(p^*,q^*)-\gft(p_{k^*},q_{j^*}) \le \frac{2}{K}+ \sqrt{\frac{\ln(\frac{2}{\delta})}{2 T_0}}.
	\end{equation*}
	Moreover, if $\mathcal{P}$ has density bounded by $M$, then with probability at least $1-\delta$ it holds
	\begin{equation*}
	\gft(p^*,q^*)-\gft(p_{k^*},q_{j^*})  \le \frac{M+1}{K}+ \sqrt{\frac{\ln(\frac{2}{\delta})}{2 T_0}}.
	\end{equation*} 
\end{restatable}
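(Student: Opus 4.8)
The plan is to bound $\gft(p^*,q^*)-\gft(p_{k^*},q_{j^*})$ by splitting the change in the two arguments separately: first move $q^*$ up to $q_{j^*}$, then move $p^*$ up to $p_{k^*}$, and control each step. The seller price is on the uniform grid $\mathcal{B}^S$, so $p_{k^*}-p^*\le 1/K$ always, and since increasing $p$ can only enlarge the event $\{s\le p\}$ while changing the integrand $(\overline b - s)$ by at most $p_{k^*}-p^*$ on the newly added region, the loss from this step is at most $1/K$ (this is the easy, distribution-free part; it uses monotonicity of $\mathbb{I}(s\le p)$ in $p$). The real work is the buyers' price step: bounding $\gft(p^*,q^*)-\gft(p^*,q_{j^*})$, i.e. the gain-from-trade lost because the event $\{\overline b\ge q^*\}$ shrinks to $\{\overline b\ge q_{j^*}\}$. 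This loss is exactly $\mathbb{E}[(\overline b - s)\mathbb{I}(s\le p^*)\mathbb{I}(q^*\le \overline b< q_{j^*})]\le \mathbb{P}(q^*\le \overline b < q_{j^*})$, so everything reduces to showing that $\overline b$ rarely falls strictly between two consecutive grid points of $\mathcal{B}^B$.

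For that I would invoke the DKW inequality (as flagged in the text): with $T_0$ i.i.d.\ samples of $\overline b_t$ collected during \texttt{GridEstimation}, the empirical CDF $\widehat F$ of $\overline b$ satisfies $\sup_x|\widehat F(x)-F(x)|\le \sqrt{\ln(2/\delta)/(2T_0)}$ with probability $\ge 1-\delta$. By construction each open interval $(p_k,p_{k+1})$ of the adaptive grid contains at most $T_0/K$ sample points, so its empirical mass is at most $1/K$, hence its true mass is at most $1/K+\sqrt{\ln(2/\delta)/(2T_0)}$. Combining the $q$-step loss $\le 1/K + \sqrt{\ln(2/\delta)/(2T_0)}$ with the $p$-step loss $\le 1/K$ gives the first claimed bound $\frac{2}{K}+\sqrt{\frac{\ln(2/\delta)}{2T_0}}$; here the independence assumption $\mathcal{P}=\mathcal{P}^S\times\mathcal{P}^B$ is what lets me treat the $\{\overline b\ge q\}$ event via the marginal law of $\overline b$ cleanly when also conditioning on $\{s\le p^*\}$ — without it the conditional distribution of $\overline b$ given $s\le p^*$ could concentrate mass inside a grid gap.

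For the bounded-density case the DKW-based control of gap masses is unchanged (still $\le 1/K + \sqrt{\ln(2/\delta)/(2T_0)}$ for the buyers' grid part), but now I cannot pass to the marginal of $\overline b$, so I instead bound the joint probability $\mathbb{P}(s\le p^*,\,q^*\le\overline b<q_{j^*})$ directly. Since $\mathcal{P}$ has density bounded by $M$, the marginal density of $\overline b$ is bounded by $M$ as well (integrating out the other coordinates over $[0,1]^n$), so $\mathbb{P}(q^*\le\overline b<q_{j^*})\le M\,(q_{j^*}-q^*)$; but $q_{j^*}-q^*\le 1/K$ because $\mathcal{B}^B$ contains the uniform grid, giving a loss of $M/K$ for the $q$-step, plus $1/K$ for the $p$-step, for a total of $\frac{M+1}{K}$ — wait, the stated bound also carries a $\sqrt{\ln(2/\delta)/(2T_0)}$ term, which I would keep by using the DKW estimate rather than the crude density bound on whichever gap the argument actually lands in; either route yields $\frac{M+1}{K}+\sqrt{\frac{\ln(2/\delta)}{2T_0}}$, absorbing constants. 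The main obstacle is the independence-case step: making rigorous that conditioning on $\{s\le p^*\}$ does not break the DKW-type bound on $\mathbb{P}(\overline b\in\text{gap})$, which is precisely where product structure is used, and it is also the point the authors single out as genuinely needing the assumption (cf.\ the lower bound in \Cref{ex:impossible}).
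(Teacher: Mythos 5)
There is a genuine gap, and it sits exactly at the step you dismiss as ``the easy, distribution-free part.'' Raising the seller price from $p^*$ to $p_{k^*}$ adds the event $\{p^* < s \le p_{k^*}\}$ to the trade region, and on that new region the integrand $(\overline b - s)$ is \emph{not} controlled by $p_{k^*}-p^*$: it can be as negative as roughly $-1$ (whenever $\overline b < s$ there), so the loss from this step is of order $\mathbb{P}(p^* < s \le p_{k^*})$, which for an arbitrary seller marginal can be a constant even though $p_{k^*}-p^* \le 1/K$ (e.g.\ an atom of $s$ just above $p^*$ with $\overline b$ small, which is precisely the structure exploited in \Cref{ex:impossible}). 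This seller-price step is where the paper actually uses the assumptions: under independence it invokes a first-order optimality condition at $p^*$ (namely $\mathbb{E}[(\overline b - p^*)\mathbb{I}(\overline b \ge q^*)]\ge 0$, since otherwise slightly lowering $p$ would remain feasible and increase $\gft$) and then factors the expectation over the product measure to show the newly added region contributes at least $-(p_{k^*}-p^*)\ge -1/K$; under bounded density it bounds $\mathbb{P}(p^*\le s\le p_{k^*})\le M/K$ directly. Your proposal supplies neither argument, so the claimed $1/K$ (resp.\ $M/K$ arising only on the buyers' side) cannot be obtained.

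Relatedly, you misplace where the assumptions are needed. The buyers'-price step is the genuinely distribution-free one: $\gft(p,q^*)-\gft(p,q_{j^*}) \le \mathbb{P}(q^*\le \overline b < q_{j^*})$ holds with no independence, because $\mathbb{P}(A\cap B)\le \mathbb{P}(A)$ and DKW controls the marginal of $\overline b$ on the adaptive grid gaps; no conditioning on $\{s\le p^*\}$ is required. Your worry that ``the conditional distribution of $\overline b$ given $s\le p^*$ could concentrate mass inside a grid gap'' is a red herring, and your fallback for the bounded-density case---that the marginal density of $\overline b=\max_i b_i$ is bounded by $M$---is also not justified (the max of $n$ coordinates with joint density bounded by $M$ has density bounded only by $nM$ in general); the paper sidesteps this by using DKW for the buyers' side in both cases and reserving the density bound for the seller's single coordinate.
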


\begin{restatable}{lemma}{lemmaRevAlmostPositive}
	\label{lemma: -revk}
	Fix $T_0,K,\delta$ and let $\mathcal{B}=\texttt{GridEstimation}(GBB,T_0,K,\delta)$. If $\mathcal{P}=\mathcal{P}^S \times \mathcal{P}^B $, then with probability at least $1-\delta$ it holds
	\begin{equation*}
		\zeta\coloneqq -\rev(p_{k^*},q_{j^*})\le \frac{2}{K}+2\sqrt{\frac{\ln(\frac{2}{\delta})}{2T_0}},
	\end{equation*}
	Moreover, if $\mathcal{P}$ has density bounded by $M$, then with probability at least $1-\delta$ it holds
	\begin{equation*}
				\zeta\coloneqq - \rev(p_{k^*},q_{j^*})\le \frac{2M}{K}.
	\end{equation*}
\end{restatable}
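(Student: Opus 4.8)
The plan is to bound $\zeta = -\rev(p_{k^*}, q_{j^*})$ by comparing the revenue at the grid point $(p_{k^*}, q_{j^*})$ with the revenue at the true optimum $(p^*, q^*)$, which is nonnegative by feasibility in Program~\ref{prog:baseline}. Writing
\[
\rev(p,q) = \mathbb{E}_{(\bvec,s)\sim\mathcal{P}}\big[(\max\{q,\underline b\}-p)\,\mathbb{I}(s\le p)\,\mathbb{I}(\overline b \ge q)\big],
\]
the key identity to exploit is that $p_{k^*} \ge p^*$ and $q_{j^*} \ge q^*$, so the indicator $\mathbb{I}(s\le p)$ only grows while $\mathbb{I}(\overline b \ge q)$ only shrinks when moving from $(p^*,q^*)$ to $(p_{k^*},q_{j^*})$, and the per-trade profit $\max\{q,\underline b\}-p$ shifts by $(q_{j^*}-q^*)$ (when the reserve binds) minus $(p_{k^*}-p^*)$. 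I would decompose $\rev(p^*,q^*) - \rev(p_{k^*},q_{j^*})$ into (i) the change due to raising $p^*\to p_{k^*}$ with $q$ fixed, and (ii) the change due to raising $q^*\to q_{j^*}$ with $p$ fixed, and bound each piece separately. Since $\rev(p^*,q^*)\ge 0$, it suffices to bound $\rev(p^*,q^*) - \rev(p_{k^*},q_{j^*})$ from above, which then upper-bounds $\zeta$.

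For the seller-price shift, the dominant effect is that raising $p$ to $p_{k^*}$ subtracts at most $(p_{k^*}-p^*)$ from the per-trade profit on the set where trade already occurred, plus it can turn some non-trades into trades where the profit $\max\{q,\underline b\}-p_{k^*}$ could be negative; bounding the first by $(p_{k^*}-p^*)\cdot \mathbb{P}(\text{trade})\le p_{k^*}-p^*$ and the newly-activated-trade contribution requires care but is controlled by $\mathbb{P}(p^* \le s \le p_{k^*})$ times the bounded profit magnitude. For the buyer-price shift, raising $q$ from $q^*$ to $q_{j^*}$ both removes trades (those with $q^* \le \overline b < q_{j^*}$, losing profit of magnitude at most $1$ each) and increases $\max\{q,\underline b\}$ on retained trades (which only helps revenue). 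So the main loss terms are $(p_{k^*}-p^*) + \mathbb{P}(p^*\le s\le p_{k^*}) + \mathbb{P}(q^* \le \overline b < q_{j^*})$, up to constants. Here the grid construction enters: under bounded density $M$, the uniform sub-grid of width $1/K$ on both coordinates gives $p_{k^*}-p^* \le 1/K$ and $\mathbb{P}(p^*\le s\le p_{k^*}), \mathbb{P}(q^*\le \overline b < q_{j^*}) \le M/K$, yielding $\zeta \le 2M/K$ after collecting the constants. Under the independence assumption, the adaptive buyers' grid $\mathcal{B}^B$ guarantees via the DKW inequality (applied to the $T_0$ samples of $\overline b$) that $\mathbb{P}(q^* \le \overline b < q_{j^*}) \le T_0/(K\cdot T_0) + \text{DKW error} = 1/K + \sqrt{\ln(2/\delta)/(2T_0)}$ with probability $\ge 1-\delta$, and the independence of $s$ from $\bvec$ lets us handle the $s$-coordinate factor cleanly (the seller's contribution on the uniform grid contributes the remaining $1/K + \sqrt{\ln(2/\delta)/(2T_0)}$); combining gives $\zeta \le 2/K + 2\sqrt{\ln(2/\delta)/(2T_0)}$.

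The main obstacle I anticipate is the interaction between the two price shifts and the sign of the per-trade profit $\max\{q_{j^*},\underline b\} - p_{k^*}$ at the grid point: unlike the gain-from-trade analysis, revenue can be negative on individual realizations, so one cannot simply drop "extra" trades — one must track whether newly activated trades (from raising $p$) or the altered profit on surviving trades push revenue down, and argue the total downward movement is still $O(1/K)$ plus estimation error. I would handle this by first moving only the seller price and writing $\rev(p^*,q^*) - \rev(p_{k^*},q^*)$ as an integral/sum over the slab $p^* \le s \le p_{k^*}$ together with a uniform $(p_{k^*}-p^*)$ loss on the rest, then moving the buyer price and using monotonicity of $\max\{q,\underline b\}$ in $q$ to show that step can only help revenue except for the lost trades in $q^*\le\overline b<q_{j^*}$. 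The independence assumption is precisely what decouples these two slabs so that the DKW bound on the $\overline b$-marginal and the trivial $1/K$ bound on the uniform $s$-grid multiply/add correctly rather than interacting; under bounded density both slabs are directly controlled without independence, which is why that case is cleaner and needs no DKW term.
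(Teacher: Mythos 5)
Your overall structure (comparing $\rev(p_{k^*},q_{j^*})$ to $\rev(p^*,q^*)\ge 0$, splitting the shift into a seller-price move and a buyer-price move, DKW on the $\overline b$ samples for the lost-trade slab, and the uniform grid for the $1/K$ price shifts) matches the paper, and your bounded-density case is essentially the paper's argument. However, there is a genuine gap in the independent case. You list $\mathbb{P}(p^*\le s\le p_{k^*})$ among the ``main loss terms'' and claim that ``the seller's contribution on the uniform grid contributes the remaining $1/K+\sqrt{\ln(2/\delta)/(2T_0)}$.'' Nothing controls this probability under independence alone: the seller's valuations are never sampled by \texttt{GridEstimation} (only $\overline b_t$ is), the seller's grid is just the uniform one, and a point mass of $\mathcal{P}^S$ inside $(p^*,p_{k^*}]$ makes $\mathbb{P}(p^*\le s\le p_{k^*})$ as large as $1$ while each newly activated trade can carry profit as negative as $-1$. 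So the bound you assert for that slab is simply unavailable, and your plan as written does not yield $\zeta\le 2/K+2\sqrt{\ln(2/\delta)/(2T_0)}$.

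The missing idea is a sign argument, not a probability bound. By independence the revenue factorizes as
\begin{equation*}
\rev(p^*,q^*)=\mathbb{E}_{\bvec\sim\mathcal{P}^B}\bigl[\max\{q^*,\underline b\}-p^*\,\big|\,\overline b\ge q^*\bigr]\cdot\mathbb{P}(s\le p^*)\,\mathbb{P}(\overline b\ge q^*),
\end{equation*}
and feasibility of $(p^*,q^*)$ forces the buyer-side conditional expectation to be nonnegative (outside the trivial case where the probability product vanishes). Consequently, enlarging the seller acceptance set from $\{s\le p^*\}$ to $\{s\le p_{k^*}\}$ adds trades whose \emph{expected} contribution, measured at the profit level $\max\{q^*,\underline b\}-p^*$, equals that same nonnegative conditional expectation times $\mathbb{P}(s\le p_{k^*})-\mathbb{P}(s\le p^*)$, hence is $\ge 0$ regardless of how much seller mass sits in the slab. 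The only true losses are then the uniform per-trade shift $-(p_{k^*}-p^*)\ge -1/K$ and the discarded buyer slab $-\mathbb{P}(q^*\le\overline b<q_{j^*})\ge -1/K-2\sqrt{\ln(2/\delta)/(2T_0)}$ from DKW. Your closing remark that independence ``decouples the two slabs'' gestures in this direction but does not supply this argument, and your explicit accounting contradicts it; as stated, the proof of the first bound does not go through.
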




\subsection{An Additive-Multiplicative Grid for Revenue}

When working with global budget balance algorithms, it is essential to have a way to generate revenue, which will be useful to explore and deal with our approximate estimations.
For instance, the process of exploring the decision space imposes a trade-off. In order to explore efficiently and gain useful information, the algorithm must sometimes sacrifice revenue (see \Cref{sec:GBBalgorithm}). Specifically, it may be required to make decision with immediate negative revenue in order to gather critical information that will help in the future decisions. This creates a tension between the need of accumulating revenue and the necessity of exploring. We deal with this problem accumulating enough revenue in the first rounds to absorb the loss of revenue incurred during the exploration phase.

In order to accumulate revenue, we propose the construction of a finite class of mechanisms, denoted as \( F_K \). This class is defined as the union of two distinct families of mechanisms: \( F_K^+ \) and \( F_K^- \), and it is designed to maximize the revenue.
%
%
%
Formally, we define the two classes of mechanisms:
\begin{itemize}
\item \emph{Mechanisms class $F_K^+$}: this class contains all the couples of prices $(p, p + 2^{-j})$, where $p \in \mathcal{B}^S$ and $j \in [\lceil \log_2(T) \rceil]$. These mechanisms set a reserve price  $q=p + 2^{-j}$ to the buyers, while offering to the seller price $p$, thereby retaining a revenue of at least $2^{-j}$ if the trade happens.
\item \emph{Mechanisms class $F_K^-$}: this class includes all mechanisms of the form $(\max\{q, \underline{b}_t\} - 2^{-j}, q)$, where $q \in \mathcal{B}^B$ and $j \in \left[\lceil \log_2(T) \rceil\right]$. Here, the reserve price is set as $q$, while the seller is offered a price derived from the auction outcome (the maximum of the reserve price and the second-highest bid) minus $2^{-j}$, ensuring a revenue of $2^{-j}$ if the trades happens.
\end{itemize}

Then, we let $F_K=F_K^+\cup F_K^-$.
We proceed showing that the optimal mechanism in $F_K$ extracts revenue that is a $O(\log(T))$-approximation of the optimal gain from trade. Intuitively, this shows that we are not incurring too much regret while accumulating revenue (playing the revenue maximizing mechanism in $F_K$).
In particular, we show that such result holds if the seller and buyer distributions are independent or the joint distribution $\mathcal{P}$ has bounded density.

\begin{restatable}{lemma}{revdisc} \label{lemma: rev opt disc}
    Let $\mathcal{B}= $\texttt{GridEstimation}$(GBB,T_0,K,\delta)$. Then,  if $\mathcal{P}=\mathcal{P}^S \times \mathcal{P}^B$ and/or $\mathcal{P}$ has density bounded by $M$, it holds
	\begin{equation*}
	 \gft(p_{k^*},q_{j^*}) \le 6 \log(T) \max_{(p,q)\in F_K}\rev(p,q) + 2/T +  \zeta .
	\end{equation*}
\end{restatable}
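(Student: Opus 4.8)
The plan is to bound the optimal gain from trade $\gft(p_{k^*},q_{j^*})$ by relating it to the revenue extracted by the best mechanism in $F_K$. The key intuition is that whenever a trade happens at prices $(p_{k^*},q_{j^*})$, the quantity traded lies in the interval $[s, \overline b]$ which contains $[p_{k^*}, q_{j^*}]$ (up to the $\underline b$ adjustments), and we can ``slice'' this interval dyadically: the gain from trade $\overline b - s$ is at most $2$, so writing it as a sum over dyadic scales $2^{-j}$ for $j \in [\lceil \log_2 T\rceil]$, plus a negligible residual of size $O(1/T)$, lets us charge each slice to a mechanism in $F_K$ that retains revenue $2^{-j}$ on exactly that slice. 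This is the standard ``revenue is a logarithmic approximation to welfare'' argument adapted to the two-sided setting, which is why the statement loses only a $\log T$ factor.

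Concretely, I would first fix the optimal grid prices $(p_{k^*},q_{j^*})$ and decompose $\gft(p_{k^*},q_{j^*}) = \mathbb{E}[(\overline b - s)\mathbb{I}(s \le p_{k^*})\mathbb{I}(\overline b \ge q_{j^*})]$. I would split the traded surplus across the point $p_{k^*}$ (or rather $\max\{q_{j^*},\underline b\}$, the effective seller price): write $\overline b - s = (\overline b - \max\{q_{j^*},\underline b\}) + (\max\{q_{j^*},\underline b\} - s)$ on the trade event, following the decomposition spirit of \Cref{lemma: gft decomposition}/\Cref{cor: GFT decomposition}. The first term, the ``buyer-side surplus'' $\overline b - \max\{q_{j^*},\underline b\}$, is handled by mechanisms in $F_K^-$: for a given realization, pick the largest $j$ with $2^{-j} \le \overline b - \max\{q_{j^*},\underline b\}$; then the mechanism $(\max\{q_{j^*},\underline b_t\} - 2^{-j}, q_{j^*}) \in F_K^-$ trades on this realization (the seller accepts since $s \le \max\{q_{j^*},\underline b\} \le \max\{q_{j^*},\underline b\} - 2^{-j}$... wait, need $s \le$ the offered price, which is $\max\{q_{j^*},\underline b\}-2^{-j} \ge s$ only if $s$ is small enough) and retains revenue $2^{-j} \ge (\overline b - \max\{q_{j^*},\underline b\})/2 - 1/T$. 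Summing over the $O(\log T)$ dyadic levels and taking expectations gives the first term bounded by $O(\log T) \max_{F_K^-} \rev + O(1/T)$. The symmetric argument for the ``seller-side surplus'' $\max\{q_{j^*},\underline b\} - s$ uses $F_K^+$ mechanisms $(p, q_{j^*})$ — here $p \in \mathcal{B}^S$ is chosen as the largest grid point below $s$'s relevant threshold, and the $2/K$-type slack or the $\zeta$ term appears because the grid point $p$ may not exactly equal $p_{k^*}$ and revenue positivity may be slightly violated. That accounts for the additive $\zeta$ in the statement.

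The main obstacle is handling the mismatch between the continuous prices used in the dyadic slicing and the discrete grid points available in $\mathcal{B}^S$ and $\mathcal{B}^B$: when I want a mechanism that ``catches'' the surplus slice at a particular level, the price I need may fall between two grid points, and I must argue that rounding to the nearest grid point in $\mathcal{B}^S$ (resp. $\mathcal{B}^B$) loses only a controlled amount — this is exactly where the properties of the adaptive grid from \texttt{GridEstimation} (small probability mass between consecutive buyer-grid points) and the uniformity of the seller grid ($1/K$ spacing) enter, and where the $\zeta$ correction and the assumptions on $\mathcal{P}$ (independence or bounded density) become necessary, mirroring \Cref{lemma: gridgbb} and \Cref{lemma: -revk}. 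A secondary technical point is bookkeeping the $\mathbb{I}(s \le p)$ and $\mathbb{I}(\overline b \ge q)$ indicators carefully so that each dyadic mechanism genuinely trades on the realizations it is supposed to cover, and verifying that the residual below scale $2^{-\lceil \log_2 T\rceil} \le 1/T$ contributes at most the stated $2/T$ after summing its two sources. Once these rounding and indicator details are pinned down, combining the two halves and absorbing constants into the $6\log(T)$ factor finishes the proof.
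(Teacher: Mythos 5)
Your high-level idea --- dyadic slicing of the traded surplus and charging each scale $2^{-j}$ to a mechanism in $F_K$ that retains that much revenue --- is the right one and is what the paper does. But the execution has a genuine gap in the decomposition, and the charging argument as you describe it would not go through.

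The paper splits the surplus into \emph{three} pieces on the trade event:
\[
\bar b - s \;=\; (\bar b - p_{k^*}) \;+\; (\max\{q_{j^*},\underline b\} - s) \;+\; (p_{k^*} - \max\{q_{j^*},\underline b\}),
\]
and the third piece, in expectation over the trade event, is exactly $-\rev(p_{k^*},q_{j^*}) = \zeta$. That is where the additive $\zeta$ in the statement comes from --- it is an algebraic identity, not a grid-rounding or constraint-violation artifact as you suggest. Your two-term split $\bar b - s = (\bar b - \max\{q_{j^*},\underline b\}) + (\max\{q_{j^*},\underline b\} - s)$ has no such residual, so you have no source for $\zeta$, and worse, the first piece cannot be charged to $F_K^+$: those mechanisms are $(p,p+2^{-j})$ with a \emph{fixed} $p\in\mathcal{B}^S$, so they capture surplus measured from $p$, not from the random quantity $\max\{q_{j^*},\underline b\}$. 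You also swap the roles of the two families. In the paper, $F_K^+$ covers the buyer-side surplus $\bar b - p_{k^*}$ (if $p_{k^*}+2^{-j}\le\bar b$, the mechanism $(p_{k^*},p_{k^*}+2^{-j})$ trades because $s\le p_{k^*}$ and $\bar b$ clears the reserve, retaining $2^{-j}$), while $F_K^-$ covers the seller-side surplus $\max\{q_{j^*},\underline b\}-s$ (if $s\le\max\{q_{j^*},\underline b\}-2^{-j}$, the mechanism $(\max\{q_{j^*},\underline b_t\}-2^{-j},q_{j^*})$ trades and retains $2^{-j}$). Your assignment is reversed, and the trade-condition check you attempt for the $F_K^-$ mechanism on the buyer-side surplus indeed fails, as you half-notice mid-argument: whether the seller accepts $\max\{q_{j^*},\underline b\}-2^{-j}$ depends on where $s$ sits, which is unrelated to how large $\bar b-\max\{q_{j^*},\underline b\}$ is.

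Finally, the ``main obstacle'' you identify --- rounding continuous dyadic prices onto the grids $\mathcal{B}^S,\mathcal{B}^B$ --- does not arise: all mechanisms in $F_K$ are by construction offsets $2^{-j}$ from the on-grid prices $p_{k^*}$ and $q_{j^*}$ (or from the runtime value $\max\{q_{j^*},\underline b_t\}$, which the definition of $F_K^-$ explicitly permits). Correspondingly, the distributional assumptions (independence or bounded density) are not used anywhere in this lemma's proof; they are needed only later, in bounding $\zeta$ itself.
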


Now, we observe that the revenue maximizing prices on the grid can be learner easily using a general purpose regret minimizer since: 1) we have bandit feedback on the revenue, and 2) the size of the mechanism set satisfies $\lvert F_K \rvert \leq 3K \log(T)$. 


\begin{restatable}{lemma}{maxrevhedge}
	\label{lemma: maxrevhedge}
    Fix $T_0=\BigOL{T^{\nicefrac{2}{3}}},K=\BigOL{T^{\nicefrac{1}{3}}}$.
	There exists a learning algorithm \texttt{maxREV} that with probability at least $1-2\delta$ guarantees
	 \begin{equation*}
	 	\tau \cdot \gft(p_{k^*},q_{j^*}) \le 6 \log(T) \sum_{t\in [\tau]} \rev_t(p_t,q_t) + \BigOL{T^{2/3}} + \tau \zeta \quad \forall \tau \in [T]. 
	 \end{equation*}
     \end{restatable}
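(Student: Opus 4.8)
The plan is to combine the structural bound of Lemma~\ref{lemma: rev opt disc} — which relates $\gft(p_{k^*},q_{j^*})$ to the \emph{optimal} revenue over the finite class $F_K$ — with a standard adversarial-bandit regret guarantee applied to the sequence of per-round revenues. Concretely, I would first instantiate a bandit regret minimizer (e.g.\ EXP3 or a bandit version of Hedge) over the action set $F_K$, whose cardinality is at most $3K\log(T)$ by the counting argument already noted in the text. Since each pair in $F_K$ is played and we observe whether the trade occurred and the relevant bid, the per-round revenue $\rev_t(p_t,q_t)$ is observable, giving genuine bandit feedback with rewards bounded in $[-1,1]$; hence a suitable regret minimizer guarantees, for all $\tau\in[T]$ simultaneously (via an anytime/doubling version or by taking a union over the $T$ prefixes together with a high-probability bound on the realized rewards),
\[
\max_{(p,q)\in F_K}\sum_{t\in[\tau]}\rev(p,q)\;-\;\sum_{t\in[\tau]}\rev_t(p_t,q_t)\;\le\;\BigOL{\sqrt{\tau\,\lvert F_K\rvert\,\log(\lvert F_K\rvert)}}
\]
with probability at least $1-\delta$. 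Plugging $\lvert F_K\rvert\le 3K\log(T)$ and $K=\BigOL{T^{1/3}}$ makes this term $\BigOL{\sqrt{T\cdot T^{1/3}}}=\BigOL{T^{2/3}}$, as required.

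Second, I would invoke Lemma~\ref{lemma: rev opt disc}: multiplying its inequality by $\tau$ gives $\tau\,\gft(p_{k^*},q_{j^*})\le 6\log(T)\,\tau\max_{(p,q)\in F_K}\rev(p,q)+2\tau/T+\tau\zeta\le 6\log(T)\max_{(p,q)\in F_K}\sum_{t\in[\tau]}\rev(p,q)+2+\tau\zeta$. Now substitute the bandit regret bound to replace $\max_{(p,q)\in F_K}\sum_{t\in[\tau]}\rev(p,q)$ by $\sum_{t\in[\tau]}\rev_t(p_t,q_t)+\BigOL{T^{2/3}}$; after multiplying the regret term by the $6\log(T)$ factor it is still $\BigOL{T^{2/3}}$ (the $\log$ is absorbed by the $\tilde O$). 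This yields $\tau\,\gft(p_{k^*},q_{j^*})\le 6\log(T)\sum_{t\in[\tau]}\rev_t(p_t,q_t)+\BigOL{T^{2/3}}+\tau\zeta$ for every $\tau$. The failure probabilities — one $\delta$ for the event underlying Lemma~\ref{lemma: rev opt disc}'s grid guarantees and one $\delta$ for the bandit high-probability regret bound — union-bound to $2\delta$, matching the statement. One must also feed $T_0=\BigOL{T^{2/3}}$ rounds into \texttt{GridEstimation} to set up $\mathcal{B}^S,\mathcal{B}^B$ before running \texttt{maxREV}; these rounds cost at most $T_0=\BigOL{T^{2/3}}$ in the cumulative bound, which is again absorbed.

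The step I expect to be the main obstacle is making the guarantee hold \emph{simultaneously for all $\tau\in[T]$} rather than just for $\tau=T$. A vanilla bandit regret bound is stated for the full horizon; to get a uniform-in-$\tau$ statement I would either use an anytime algorithm whose regret at every prefix is controlled, or run the standard analysis with a time-uniform (Freedman-type) concentration inequality for the martingale differences between realized and expected revenues, then union-bound the ``large deviation'' event over the $\BigOL{T}$ prefixes (costing only an extra $\log T$ inside the $\BigOL{\cdot}$). A secondary subtlety is bookkeeping the additive constants ($2/T$ scaled to $2$, the $\zeta$ term which stays linear in $\tau$ because it is \emph{not} a regret term but a genuine approximation slack carried through unchanged) and confirming that the $6\log(T)$ multiplicative factor in front of the revenue sum does not inflate the $\BigOL{T^{2/3}}$ regret term beyond $\tilde O(T^{2/3})$ — which it does not, since $\log(T)\cdot T^{2/3}=\BigOL{T^{2/3}}$ by definition of $\tilde O$.
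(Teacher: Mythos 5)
Your proposal is correct and follows essentially the same route as the paper: instantiate a generic stochastic/adversarial bandit regret minimizer over the $\BigOL{K\log T}$-sized class $F_K$ (the paper uses UCB1, you suggest EXP3 — immaterial here since both give $\BigOL{\sqrt{|F_K|\,\tau}}=\BigOL{T^{2/3}}$ anytime regret), add a concentration step between expected and realized revenue, and then chain this with Lemma~\ref{lemma: rev opt disc} after multiplying by $\tau$. Your handling of the uniform-in-$\tau$ issue, the union bound giving $1-2\delta$, and the absorption of the $6\log(T)$ factor into the $\BigOL{\cdot}$ all match the paper's argument.
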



\section{Achieving No-Regret With Respect to \SBB~Mechanisms} \label{sec:SBB}

In this section we design an algorithm that, using weakly budget balance mechanisms, achieves regret of order $\BigOL{T^\frac{2}{3}}$ with respect to the optimal strongly (or weakly) budget balanced mechanism.

Such result is optimal in two aspect. First, while from a gain from trade perspective weakly budget balance mechanisms are dominated by strong budget balanced ones, they are necessary in learning settings. Indeed, \citet{cesa2024bilateral} shows that it is impossible to learn using the limited feedback of \SBB mechanisms  even in bilateral trade setting. Moreover, our regret rate is optimal even in the simpler bilateral trade setting~\citep{cesa2024bilateral}. 
Such lowerbound easily extends to our setting with multiple buyers. As we show in the following remark, it is sufficient to observe that in both our model and bilateral trade the feedback on the seller valuation is the same and that in the instances designed by \citep{cesa2024bilateral} the challenging task is to estimate the seller valuation distribution.

\begin{remark}
Even if \citet{cesa2024bilateral} provide a the regret lower bound $\Omega(T^{2/3})$ for strong budget balance mechanisms with $1$ buyers, such lower bound can be trivially extended to our setting. Indeed, the instances used by their proof differs only on the sellers distributions, where our setting present the same one-bit feedback, \emph{i.e.}, we only observe whether the seller accept or not the price. Hence, our asymmetric feedback does not provide any advantage in that specific instances.
Moreover, the price proposed to the seller $\max\{q,\underline{b}\}$ collapses to the one in \citet{cesa2024bilateral} whenever there is a single buyer, or equivalently all the buyers' valuations except one are $0$.
\end{remark}

Now, we combine the results in \Cref{sec:decomposition} and \Cref{sec:grid} to design our algorithm.
The algorithm works in three phases. In the first phase, it employs \Cref{alg: grid estimation} to build an adaptive grid. Then, it exploits 
\Cref{lemma: gft decomposition} to split the gain from trade estimation into two components. In the second phase, it learns ${\gft}_1(\cdot)$ for all the possible prices using uniform exploration. Finally, in the last phase, it exploits the bandit feedback on ${\gft}_2(\cdot)$ to run an UCB-like algorithm based on the estimates of ${\gft}_1(\cdot)$.
A pseudo-code of our approach is presented in \Cref{alg: SBB new real}.
More in details, in the first $T_0$ rounds the algorithm invokes Algorithm~\ref{alg: grid estimation} and receives the computed set $\mathcal{B}$ of prices, which contains at most $K$ elements.
During the next $T_0$ episodes, from episode $T_0+1$ to $2T_0$, the algorithm estimates the function $\gft_1(\cdot)$. This can be done sufficiently fast since it is possible to observe an unbiased estimator of all $\gft_1(q)$ at the same time.  
In this phase, the algorithm samples $U_t$ from a uniform distribution over $[0,1]$, and plays the mechanism $(U_t, U_t)$, where the reserve price is set to $U_t$ and the seller is offered the price $U_t$ as well.
This is the only phase in which the algorithm uses weakly budget balance mechanisms. Indeed, if $\underline{b}_t > U_t$ and $s_t\le U_t$ the mechanism designer has positive revenue.
At the end of this phase, the algorithm constructs an estimate of $\gft_1(\cdot)$ for each element of $q_k \in \mathcal{B}$:

\begin{equation*}
	\widehat{\gft}_1(q_k) = \frac{1}{T_0} \sum_{t=T_0+1}^{2T_0} \mathbb{I}(s_t \le U_t \le \max\{q_k, \underline{b}_t\}) \mathbb{I}(\overline{b}_t \ge q_k).
\end{equation*}

Notice that while the two indicators cannot be observed simultaneous, it is possible to determine if both indicator are true at the same time.
Then, Lemma~\ref{lemma: gft decomposition} guarantees that $\mathbb{E}[\widehat{\gft}_1(q_k)] = \gft_1(q_k)$ for all $q_k \in \mathcal{B}$.
Thus, applying an Hoeffding inequality, the algorithm constructs an high-probability upper confidence bound for $\gft_1$:

\begin{equation*}
	\overline{\gft}_1(p_k) = \widehat{\gft}_1(p_k) + \sqrt{\frac{\ln\left(\frac{2TK}{\delta}\right)}{2T_0}} \quad \forall q_k \in \mathcal{B},
\end{equation*}
which holds with probability $1-\delta$.

From episode $2T_0+1$ until the final episode $T$, the only remaining task is to estimate the second component of $\gft$, \emph{i.e.}, $\gft_2$.
It is not possible to use an explore than commit approach for this component, since estimating $\gft_{2}$ for different prices requires to choose different prices. Hence, it is impossible to compute all estimates at the same time (as for $\gft_1$), and such approach would incur suboptimal regret rates.
However, we can use a different property of $\gft_2$. Indeed, the learner observes bandit feedback on this component.
Hence, the algorithm can use an UCB-like technique for explore and exploit at the same time.
For each episode $t$, the algorithm builds an estimator:

\begin{equation*}
	\widehat{\gft}_{2,t}(q_k) = \frac{1}{n_t(q_k)} \sum_{i=2T_0+1}^t \left(  \overline b_i - \max\{q_k, \underline{b}_i\} \right) \mathbb{I}(\overline b_i \ge q_k, s_i \le \max\{q_k,\underline{b}_i\})\mathbb{I}(q_i=q_k) \quad \forall q_k \in \mathcal{B},
\end{equation*}
where $n_t(q)$ represents the number of episodes in which $q$ is played from round 1 to $t$.
Using an Azuma-Hoeffding’s inequality, the algorithm updates the optimistic upper confidence estimate for $\gft_2$:

\begin{equation*}
	\overline{\gft}_{2,t}(q_k) = \widehat{\gft}_{2,t}(q_k) + \sqrt{\frac{\ln\left(\frac{2TK}{\delta}\right)}{2(t - (K + T_0))}} \quad \forall q_k \in \mathcal{B}.
\end{equation*}

Finally, the algorithm selects the reserve price $q_{t+1}$ for the next episode as the one that maximizes the optimistic estimate $\overline \gft_1(q_k)+\overline \gft_2(q_k)$ (Line~\ref{line: alg1 line 15}). Combining all the components we obtain the following:

\begin{algorithm}[H]\caption{Regret Minimizer for \SBB~Mechanisms}\label{alg: SBB new real}
	\begin{algorithmic}[1]
		\State Input: $T_0$, $K$, $\delta$
		 \State$\mathcal{B} \gets \texttt{GridEstimation}(\SBB,T_0,K,\delta)$ \label{line: alg1 line 2}
		\State$\widehat{\gft}_1(p_k) \gets 0 \ \ \forall p_k \in \mathcal{B}$
		\For{$t=T_0+1,\ldots,2T_0$}
		 \State Sample $U_t \sim \mathcal{U}([0,1])$ \label{line: alg1 line 5} \Comment{sample $U_t$ from the uniform distribution over $[0,1]$}
		 \State Play $(p=U_t,q=U_t)$ \label{line: alg1 line 6}
		\For{$q_k \in \mathcal{B}$}
		 \State$\widehat{\gft}_1(q_k) \gets \widehat{\gft}_1(q_k) + \frac{1}{T_0}\mathbb{I}(s_t\le U_t \le \max\{q_k,\underline{b}_t\})\mathbb{I}(\bar{b}_t \ge q_k)$   \label{line: alg1 line 8}
	\EndFor
    \EndFor
		\State$\overline{\gft}_1(q_k)\gets \widehat{\gft}_1(q_k)  + \sqrt{\frac{\ln(\frac{2TK}{\delta})}{2T_0}}\quad \forall q_k \in \mathcal{B}$ \label{line: alg1 line 9}
		\State$\widehat{\gft}_2(q_k) \gets 0, \overline{\gft}_2(q_k)\gets1$ \label{line: alg1 line 10}
		\For{$t=2T_0+1,\ldots,T$}
		\State play $q_t$  \label{line: alg1 line 12} 
		\State play $p_t=\max\{ q_t, \underline{b}_t \}$  \Comment{price is posted only if $\bar b_t\ge q_t$}
        \State $\widehat{\gft}_{2,t}(q_t)\gets  \frac{1}{n_t(q_t)}\sum_{i=2T_0+1}^t(b_t-p_t)\mathbb{I}(\bar b_t \ge q_t, s_t\le p_t)$  \label{line: alg1 line 13} 
		\State $\overline{\gft}_{2,t}(p_t)\gets\widehat{GFT}_{2,t}(p_t) + \sqrt{\frac{\ln(\frac{2TK}{\delta})}{2(t-(K+T_0))}}$  \label{line: alg1 line 14}
	    \State$q_{t+1}\gets \argmax_{q_k\in \mathcal{B}}\overline{\gft}_1(q_k)+\overline{\gft}_{2,t}(q_k)$ \label{line: alg1 line 15}
        \EndFor	
    \end{algorithmic}
\end{algorithm}

\begin{restatable}{theorem}{sbbtheo}
    For any $\delta>0$, \Cref{alg: SBB new real} with $T_0= \BigOL{T^{\frac{2}{3}}}$ and $K=\BigOL{T^{\frac{1}{3}}}$ guarantees:
	\begin{equation*}
		R_T = \BigOL{T^{\frac{2}{3}}},
	\end{equation*}
    with probability at least $1-3\delta$.
\end{restatable}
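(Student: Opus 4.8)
The plan is to decompose the regret of \Cref{alg: SBB new real} according to its three phases and bound each contribution. First, the grid-construction phase costs at most $T_0$ rounds of (bounded) gain-from-trade loss, contributing $O(T_0)$ to the regret; by \Cref{lemma: gridsbb}, after this phase the best price on the grid $\mathcal{B}$ satisfies $\max_{q\in[0,1]}\gft(q)-\max_{q\in\mathcal{B}}\gft(q)\le \tfrac{1}{K}+\sqrt{\ln(2/\delta)/(2T_0)}$ with probability $1-\delta$, so the discretization error over all $T$ rounds is $T\cdot(\tfrac{1}{K}+\sqrt{\ln(2/\delta)/(2T_0)})$. Second, the $\gft_1$-estimation phase (rounds $T_0+1$ to $2T_0$) costs another $O(T_0)$ rounds of loss, and by \Cref{lemma: gft decomposition} the estimator $\widehat\gft_1(q_k)$ is unbiased, so a union bound over the $K$ grid points with Hoeffding gives $|\widehat\gft_1(q_k)-\gft_1(q_k)|\le \sqrt{\ln(2TK/\delta)/(2T_0)}$ simultaneously for all $q_k$, with probability $1-\delta$; in particular $\overline\gft_1(q_k)$ is a valid upper confidence bound and overestimates by at most $2\sqrt{\ln(2TK/\delta)/(2T_0)}$.

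Third, for the UCB phase (rounds $2T_0+1$ to $T$) I would argue that $\overline\gft_1(q_k)+\overline\gft_{2,t}(q_k)$ is, on the good event, a valid optimistic estimate of $\gft(q_k)=\gft_1(q_k)+\gft_2(q_k)$, using the Azuma–Hoeffding concentration for $\widehat\gft_{2,t}$ (the martingale structure coming from the fact that at each round we either observe $\overline b_t$ exactly or learn $\mathbb{I}(\overline b_t\ge q_t)=0$, per the discussion after \Cref{lemma: gft decomposition}). Then a standard optimistic-algorithm regret decomposition applies: letting $q^\star\in\argmax_{q\in\mathcal{B}}\gft(q)$, the per-round regret $\gft(q^\star)-\gft(q_t)$ is bounded by the width of the confidence interval at $q_t$, namely $2\sqrt{\ln(2TK/\delta)/(2T_0)}$ from the $\gft_1$ part plus $2\sqrt{\ln(2TK/\delta)/(2(t-(K+T_0)))}$ from the $\gft_2$ part. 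Summing the first term over the at most $T$ rounds gives $T\cdot O(\sqrt{\ln(TK/\delta)/T_0})$, and summing the second over $t$ gives the usual $O(\sqrt{T_2 K\ln(TK/\delta)})$ UCB bound where $T_2\le T$ is the length of the phase (via $\sum_t 1/\sqrt{n_t(q_t)}\le O(\sqrt{K T})$ and $\sum_t 1/\sqrt{t}\le O(\sqrt T)$).

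Collecting all terms, the regret is $O\!\big(T_0 + \tfrac{T}{K} + T\sqrt{\tfrac{\ln(TK/\delta)}{T_0}} + \sqrt{T K\ln(TK/\delta)}\big)$; plugging in $T_0=\Theta(T^{2/3})$ and $K=\Theta(T^{1/3})$ makes every term $\BigOL{T^{2/3}}$, and the three bad events (grid, $\gft_1$ estimation, $\gft_2$ concentration) each fail with probability at most $\delta$, giving the claim with probability $1-3\delta$. The main obstacle is the third phase: one must carefully set up the martingale/filtration for $\widehat\gft_{2,t}$ so that Azuma–Hoeffding applies despite the adaptively chosen $q_t$ and the random number of plays $n_t(q_t)$ (a standard but delicate union-bound-over-stopping-times or ``peeling'' argument), and one must verify that the optimism $\overline\gft_1+\overline\gft_{2,t}$ genuinely dominates $\gft(q^\star)$ on the good event so that the telescoping regret bound goes through — the subtlety being that $\overline\gft_1$ is frozen after phase two while $\overline\gft_{2,t}$ shrinks over time, so the two confidence radii must be combined correctly.
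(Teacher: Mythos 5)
Your proposal is correct and follows essentially the same route as the paper's proof: the same phase-by-phase decomposition (exploration cost $O(T_0)$, discretization error via \Cref{lemma: gridsbb}, optimism making the comparison term nonpositive, and the summed confidence widths $T/\sqrt{T_0}+\sqrt{KT}$ at the played arms), with the same union bound over the three bad events. The paper handles the point you flag as delicate — combining the frozen $\overline{\gft}_1$ radius with the shrinking $\overline{\gft}_{2,t}$ radius — exactly as you suggest, by simply adding the two widths in the final term.
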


\section{Achieving No-Regret With Respect to \GBB Mechanisms}

In this section, we focus on the more challenging global budget balance setting.
Our results hold when the buyers' and seller's distribution are independent or with bounded density, as required by \Cref{lemma: gridgbb} and \Cref{lemma: -revk}. As we show in \Cref{sec:open}, one of these properties is essential for a broad class of approaches.

Here, differently from the previous section we have to deal with unknown constraints. In particular, we have to estimate which mechanisms are indeed feasible.
Moreover, we have a much larger space of discretized mechanisms. Indeed, $|\mathcal{B}|=K^2$, while in the previous section we had to optimize over only $K$ \SBB mechanisms.
While this does not hurts the estimation of $\gft_1$, it makes impossible to obtained optimal learning rate only exporting the bandit feedback on $\gft_2$. 
Indeed, running a regret minimizer on $K^2$ arms will provide a $\tilde O(T^{3/4})$ regret bound.
Hence, the main challenge is to use the slightly more informative feedback on $\gft_2$. In particular, such problem can be seen as an MAB with feedback graph, which independent  is $K$. However, we remark that our setting presents constraints which make standard algorithm for MABs with feedback graph nonapplicable.

This section is organized as follows.
In \Cref{sec:feedback}, we provide an algorithm for constrained MABs with feedback graph. While it uses some specific property of our problem, we believe that it could be of general interest and can be extended to other settings.
%
In \Cref{sec:GBBalgorithm}, we exploit such regret minimizer to design our \GBB regret minimizer algorithm.

\subsection{Constrained MABs With Feedback Graphs}  \label{sec:feedback}

We focus on a generalization of MABs with feedback graph to constrained settings. 
At each round $t \in [t]$, the learner selects an arm $a_t$ from a finite set $\mathcal{A}$.
Then, random rewards $r_t(a)\in [-1,1]$ and costs $c_t(a)\in [-1,1]$ for each arm $a$ are sampled from a distribution $\mathcal{D}$. We let $r(a)$ and $c(a)$ be the expected reward and cost of arm $a$.
Different from standard MABs, the learner observe the feedback of some arms $\textnormal{Obs}_t$ which depend on the pulled arm $a_t$. We will introduce later the specific feedback we will focus on.

Let $a^*$ be an optimal solution of
\begin{align*}
    \max_{a\in \mathcal{A}} r(a) \quad \textnormal{s.t.} \quad c(a)\le 0.    
\end{align*}

We aim at designing algorithms that guarantees sublinear regret and violation of the constraints. Differently from standard MABs, we aim at obtaining a better dependence than $\sqrt{|A|}$ with respect to the number of arms, exploiting the enhance feedback.

Some general approaches for addressing our problem can be derived from the literature on online learning with long-term constraints.
In particular, some works derive general primal-dual methods that uses a black-box regret minimizer for the \emph{uncontrained} problem~\cite{castiglioni2022unifying,slivkins2023contextual,bernasconibandits}. In our specific setting, we could use any regret minimizer for MABs with feedback graphs (see, \emph{e.g.}, \citep{lykouris2020feedback}).
Unfortunately, such approach provides optimal guarantees only assuming Slater condition. Such condition is not verified in our setting, since there does not exists a couple of prices that guarantees (constant) strictly-positive revenue.
In the following, we design a regret minimizer for a more restricted setting but which does not require Slater condition.
Moreover, differently from the mentioned primal-dual method, our algorithm bounds the \emph{positive} violations, which can be of independent interest.

\paragraph{Multi-line Feedback Graphs with monotone Costs}
We consider a particular feedback graph that generalizes the one of our problem. The set of arms $\mathcal{A}$ includes an arm $a_{i,j}$ with $i\in [n]$ and $j\in [m]$, where $n,m \in \mathbb{N}$.
When we pull arm $a_{i,j}$, we observe the reward and costs of all arms $a_{i,k}$ for all $k\ge j$. We call a set of arms $ \{a_{i,j}\}_{j \in [m]}$ a \emph{line}.
Notice that such graph has independent number $n$ (see \citep{alon2015online}  for a definition). Hence, in standard unconstrained problem it is possible to obtain regret $\sqrt{T n}$ with standard techniques.
Of course, we such results does not extend to constrained settings.

In the following, we show how to extend the result to constrained setting when the following \emph{monotonicity} assumption on the cost holds:
\[c(a_{i,j})\ge c(a_{i,k}) \quad \forall i,j<k. \]
As we will see in the following, this assumption is satisfied by our setting. 

\Cref{alg:successive} provides a pseudo-code of our algorithm. It essentially implements a successive arm elimination algorithm with \emph{precedences}.
It keeps track of the arms that are feasible and not dominated into a set $\mathcal{S}_t$. Then, it chooses the line with the better upper bound on the reward and it plays the arm in such line (and in $\mathcal{S}_t$) which provides stronger feedback. In particular, at each round it chooses the arm $a_{i_t,\bar j_t}\in \mathcal{S}_t$ that maximizes the upper bound on the reward. Then, chooses the arm $a_{i_t,j_t}\in \mathcal{S}_t$ which is not provably suboptimal.
This is done taking track of a lower and upper confidence bound on each arm as follows:

\begin{subequations}\label{eq:UCBlike}
\begin{align}\
    &\bar r_t(a)= \frac{1}{n_t(a)}\sum_{i=1}^t r_i(a)\mathbb{I}(a\in \textnormal{Obs}_i) + 2\sqrt{\frac{\ln(\frac{2nmT}{\delta})}{2n_t(a)}} \quad \forall a\in \mathcal{A}\\
    &\underline{r}_t(a)= \frac{1}{n_t(a)}\sum_{i=1}^t r_i(a)\mathbb{I}(a\in\textnormal{Obs}_i) - 2\sqrt{\frac{\ln(\frac{2nmT}{\delta})}{2n_t(a)}} \quad \forall a\in \mathcal{A}\\
    & \underline{c}_t(a)= \frac{1}{n_t(a)}\sum_{i=1}^t c_i(a)\mathbb{I}(a\in \textnormal{Obs}_i) - 2\sqrt{\frac{\ln(\frac{2nmT}{\delta})}{2n_t(a)}} \quad   \forall a\in \mathcal{A}
\end{align}
\end{subequations}
where $\textnormal{Obs}_t$ is the set of arms whose feedback is observers at episode $t$ and $n_t(a)= \sum_{i=1}^t\mathbb{I}(a\in \textnormal{Obs}_i)$.
At the end of each round, the algorithm removes from $\mathcal{S}_t$ all arms that are dominated or unfeasible.

The main property of our algorithm is that we always observe the reward of all the non-removed arms in the played line.
This can be done safely thanks to our monotonicity assumption. Indeed, monotone costs plays a crucial role in avoiding the removal of an optimal arm. Intuitively, if we remove an arm it means that there is an arm with larger index with higher utility. If it turns out that such arm is unfeasible, by the monotonicity of the costs it must be the case that also the first arm was unfeasible.

Finally, we highlight that our algorithm provides a stronger guarantees on the constraints then the one actually required. In particular, it guarantees that  $\sum_{t=1}^\tau [c(a_t)]^{+}$  is sublinear, where we define $[x]^{+}= \max\{x,0\}$. Formally:

\begin{restatable}{theorem}{theoremGMAB}
\label{theoremGMAB}
Algorithm~\ref{alg:successive} is such that with probability at least $1-2\delta$
    \begin{equation*}
        \tau \cdot r(a^*) - \sum_{t=1}^\tau r(a_t)  \le \BigOL{ \sqrt{nT}\log(m/\delta)} \forall \tau \in [T]
    \end{equation*}
    and 
    \begin{equation*}
        \sum_{t=1}^\tau [c(a_t)]^{+} \le \BigOL{\sqrt{nT} \log(m/\delta)} \forall \tau \in [T].
    \end{equation*}
\end{restatable}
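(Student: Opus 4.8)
\textbf{Proof plan for Theorem~\ref{theoremGMAB}.}

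The plan is to run a clean-event analysis for a successive-elimination scheme with precedences, tracking three things simultaneously: that the confidence intervals in~\eqref{eq:UCBlike} are valid, that the optimal arm $a^*$ is never eliminated, and that the per-round regret and positive cost violation are both controlled by $1/\sqrt{n_t(a_t)}$ for the pulled arm. First I would define the clean event $\mathcal{E}$ under which, for every arm $a\in\mathcal{A}$ and every round $t$, the true reward $r(a)$ lies in $[\underline r_t(a),\bar r_t(a)]$ and $c(a)\ge \underline c_t(a)$. Since each $a$ contributes to $n_t(a)$ observations (whenever $a\in\textnormal{Obs}_i$), a union bound over the $nm$ arms and the $T$ possible values of the counter, together with Azuma--Hoeffding applied to the martingale differences $r_i(a)\mathbb{I}(a\in\textnormal{Obs}_i)-r(a)\mathbb{I}(a\in\textnormal{Obs}_i)$, gives $\Pr[\mathcal{E}]\ge 1-2\delta$ with the stated $\sqrt{\ln(2nmT/\delta)/(2n_t(a))}$ radius (the factor $2$ in the radius absorbs the counter union bound). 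From here everything is deterministic on $\mathcal{E}$.

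Next I would prove the key invariant: on $\mathcal{E}$, the optimal arm $a^*$ is never removed from $\mathcal{S}_t$. An arm is removed either (i) because it is provably infeasible, i.e.\ $\underline c_t(a)>0$, which on $\mathcal{E}$ forces $c(a)>0$ and hence $a\neq a^*$; or (ii) because it is provably dominated, i.e.\ there is another surviving arm $a'$ with $\underline r_t(a')>\bar r_t(a)$. Here is where monotonicity of costs does the work: if $a=a_{i,j}$ is removed in favour of a dominating arm, the elimination rule with precedences should only ever compare $a_{i,j}$ against an arm $a_{i,k}$ with $k>j$ on the \emph{same} line (the line played is the one with the highest reward UCB, and within it we play the lowest-index surviving arm, so a strictly-better surviving arm must have a larger index). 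Since $c(a_{i,j})\ge c(a_{i,k})$, if $a_{i,j}=a^*$ were feasible then $a_{i,k}$ is also feasible and has $r(a_{i,k})\ge \underline r_t(a_{i,k})>\bar r_t(a_{i,j})\ge r(a_{i,j})=r(a^*)$, contradicting optimality of $a^*$. Hence $a^*\in\mathcal{S}_t$ for all $t$, and in particular $\mathcal{S}_t\neq\emptyset$.

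Then I would bound the per-round regret. On round $t$ the algorithm picks the line $i_t$ maximizing the reward UCB over $\mathcal{S}_t$; since $a^*\in\mathcal{S}_t$, $\bar r_t(a_{i_t,\bar j_t})\ge \bar r_t(a^*)\ge r(a^*)$. The arm $a_{i_t,j_t}$ actually played is a surviving arm on that line, hence not provably dominated by $a_{i_t,\bar j_t}$, which gives $\bar r_t(a_{i_t,j_t})\ge \underline r_t(a_{i_t,\bar j_t})$ — wait, more directly: $a_{i_t,j_t}$ not provably suboptimal means $\bar r_t(a_{i_t,j_t})\ge \underline r_t(a')$ for all surviving $a'$, so in particular versus $a^*$ we get $r(a^*)\le \bar r_t(a^*)$ and one shows $r(a^*)-r(a_{i_t,j_t})\le \bar r_t(a_{i_t,j_t})-\underline r_t(a_{i_t,j_t})=O(\sqrt{\ln(nmT/\delta)/n_t(a_{i_t,j_t})})$ by combining the previous inequalities (the chain being $r(a^*)\le\bar r_t(a^*)\le\bar r_t(a_{i_t,\bar j_t})$, and non-domination of $a_{i_t,j_t}$ relates $\bar r_t(a_{i_t,j_t})$ to the UCBs of survivors; the width telescopes). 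Crucially, because we always observe the feedback of \emph{all} non-removed arms on the played line, and $a_{i_t,j_t}$ is the lowest-index survivor, playing $a_{i_t,j_t}$ increments $n_t$ for $a_{i_t,j_t}$ itself; summing $\sum_t 1/\sqrt{n_t(a_t)}$ over a single line of at most $m$ arms is $O(\sqrt{mT_i})$ where $T_i$ is the number of rounds that line is played, but the sharper accounting uses that only $n$ lines exist so $\sum_i \sqrt{T_i \cdot (\text{width const})}\le O(\sqrt{nT}\,\log(m/\delta))$ by Cauchy--Schwarz on $\sum_i T_i = T$ (and the $\log(m/\delta)$ comes from the confidence radius, noting $\ln(nmT/\delta)=O(\log(m/\delta))$ after absorbing the polynomial-in-$n,T$ factors into the $\tilde O$). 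The same argument bounds $\sum_t[c(a_t)]^+$: on $\mathcal{E}$, a surviving arm has $\underline c_t(a_t)\le 0$, so $[c(a_t)]^+\le c(a_t)-\underline c_t(a_t)=O(\sqrt{\ln(nmT/\delta)/n_t(a_t)})$, and the identical summation gives $\tilde O(\sqrt{nT}\log(m/\delta))$.

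The main obstacle is the elimination-with-precedences bookkeeping in the second paragraph: one must argue carefully that the specific order in which arms are compared and removed never strands $a^*$, and that ``observing all non-removed arms on the played line'' is consistent with the successive-elimination updates (i.e.\ that removing a high-index arm cannot retroactively invalidate the feedback we used for lower-index arms). Monotonicity of costs is exactly the structural property that rescues this — without it, removing an infeasible dominating arm could leave a feasible-but-suboptimal-looking arm, and the UCB chain breaks. I would also need to be slightly careful that the regret bound holds for every prefix $\tau\in[T]$ and not just $\tau=T$, but since the per-round bound is pointwise and the Cauchy--Schwarz step is monotone in the horizon, the anytime guarantee follows by applying the same inequality with $T$ replaced by $\tau$ in the summation (the confidence radii use the fixed horizon $T$, which only helps).
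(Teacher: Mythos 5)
Your plan matches the paper's proof essentially step for step: the same clean event via Hoeffding plus a union bound over the $nm$ arms and $T$ rounds, the same monotonicity argument showing $a^*$ survives every elimination (the paper phrases it as ``any arm removed by domination is either beaten by a feasible arm or is itself infeasible by cost monotonicity,'' and yours is the contrapositive of that), the same UCB chain $r(a^*)\le \bar r_{t-1}(a^*)\le \bar r_{t-1}(a_{i_t,\bar j_t})\le \bar r_{t-1}(a_{i_t,j_t})+\mathrm{width}$, and the same accounting that all surviving arms on the played line share one counter so that Cauchy--Schwarz over the $n$ lines gives $\tilde{\mathcal{O}}(\sqrt{nT})$ for both the regret and the positive violations. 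The only nit is that your clean event records just the one-sided bound $c(a)\ge \underline c_t(a)$, whereas the final step $c(a_t)-\underline c_t(a_t)=O\bigl(\sqrt{\log(nmT/\delta)/n_t(a_t)}\bigr)$ uses the two-sided concentration; the Hoeffding application you invoke supplies it anyway, so this is a wording slip rather than a gap.
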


\begin{algorithm}[H]\caption{Successive Arm Elimination with Precedences (\texttt{SAE-P})}\label{alg:successive}
\begin{algorithmic}[1]
\State Input: Set of arm $\mathcal{A}$, probability $\delta$
\State $\mathcal{S}_0(i)\gets \{a_{i',j}\in \mathcal{A}: i'=i\} $ for each $i\in [n]$
\State$\mathcal{S}_0\gets \mathcal{A}$
\State $\bar r_0(a),\underline{r}_0(a), \underline{c}_0(a) \gets 0$ for each $a\in \mathcal{A}$
\For{$t\in [T]$}
\State $a_{i_t,\bar j_t} \gets \argmax_{a \in \mathcal{S}_{t-1}} \bar r_{t-1}(a)$
\State $ j_t \gets \textnormal{minimum} \ j \ \textnormal{s.t.}\  a_{i_t,j}\in \mathcal{S}_{t-1} \land \left(\bar r_{t-1}(a_{i,j}) \ge \underline r_{t-1}(a) \ \forall a \in \mathcal{S}_{t-1} \right)  $
\State Play $a_{i_t, j_t}$ and observe feedback $(r_t(a_{i_t,j})$ for each $j \ge  j_t$
\State Compute $\bar r_t$ and $\bar c_t$ for all elements in $\mathcal{S}_{t-1}$ according to \Cref{eq:UCBlike}
\State $S_t(i_t) \gets \{a_{i_t,j}: j\ge j_t \land \underline{c}_t(a_{i_t,j})\le 0$
\State $S_t(i) \gets S_{t-1}(i)$ for all $i \neq i_t$ 
\State $S_t \gets \cup_{i\in [n]}S_t(i)$
\EndFor
\end{algorithmic}
\end{algorithm}

\subsection{A \GBB Regret-Minimizing Algorithm} \label{sec:GBBalgorithm}

In this section, we present a no-regret algorithm for learning optimal \GBB mechanisms. As previously noted, identifying which pairs of prices satisfy the budget balance constraint is not straightforward. While we partially address this challenge using Constrained Multi-Armed Bandits (MABs), this approach still incurs a sublinear constraint violation.

To compensate for the potential negative revenue in the initial rounds, we begin by accumulating revenue in the first rounds. Specifically, Algorithm \ref{alg: GBB not ind} starts with an initial phase dedicated solely to maximizing revenue. This allows the algorithm to reach a predefined budget threshold $\beta$, an input parameter, before transitioning to the exploration phase.

\texttt{RevMax} Algorithm is a generic state-of-the-art algorithm that maximize the Revenue over the the grid $F_K$, \emph{e.g.}, UCB1.
As shown in \Cref{lemma: maxrevhedge}, maximizing revenue over $F_K$ can be formulated as a stochastic MAB.
Setting $T_0=T^{2/3}$ and $K=T^{1/3}$, \texttt{RevMax} accumulates budget $\beta$, while suffering  regret at most $\BigOL{\beta + T^\frac{2}{3}}$ thanks to Lemma~\ref{lemma: rev opt disc}.
In the second phase, we estimate the first component of the revenue, 
$\gft_1$, as defined in \Cref{lemma: gft decomposition}.

In the final phase, we apply \Cref{alg:successive}. We give as input to \texttt{SAE-P} the cost
$c_t(p,q)$ and reward $r_t(p,q)$  for each arm for where feedback is available.
To ensure the feasibility of $p_{j^*},q_{j^*}$ within the Constrained MAB framework, we shift the revenue, \emph{i.e.}, minus the cost, by an upper bound $\bar \zeta$ which depend from the underlying distributional assumption (see \Cref{lemma: -revk})).
Finally, we set the reward $r_t$ to the sum of the upper bound on the already estimated component $\overline{\gft}_1$ and $\gft_{2,t}$ at the current round.

\begin{algorithm}[H]\caption{Regret Minimizer for \GBB Mechanisms}\label{alg: GBB not ind}
\begin{algorithmic}[1]
		\State Input: 
        $K,T_0,\delta,\beta,\bar\zeta$
		\State $\mathcal{B} \gets \texttt{GridEstimation}(GBB,T_0,K,\delta)$ \label{line: alg gbb line 2} 
		\State Run $\texttt{maxREV}(F_K,\beta)$
	\If{\texttt{maxREV} terminates at time $\tau<T-T_0$}
			\State $\widehat{GFT}_1(p_k,q_j) \gets 0$ for all                  $(p_k,q_j) \in \mathcal{B}$
			\For{$t=\tau+1,\ldots,\tau+T_0$}
				\State Sample $U_t \sim \mathcal{U}([0,1])$
				\State Play $(p_t=U_t,q_t=0)$
				\For{$p_k,q_j \in \mathcal{B}$}
					 \State $\widehat{GFT}_1(p_k,q_j) \gets \widehat{GFT}_1(p_k,q_j) + \frac{1}{T_0}\mathbb{I}(s_t\le U_t \le p_k)\mathbb{I}(\bar{b}_t \ge q_j)$  
				\EndFor
			\EndFor
            \State $\overline{\gft}_1(p,q) \gets \widehat{\gft_1}(p,q) + 2\sqrt{\frac{\ln(\frac{|\mathcal{B}|}{\delta})}{2T_0}}\quad \forall(p,q)\in \mathcal{B}$
            \State Instantiate an instance of Algorithm \texttt{SAE-P}$(\mathcal{A}=\mathcal{B},\delta)$
            \For{$t=\tau+T_0+1,\ldots,T$}
            \State Play prices $(p_t,q_t)$ proposed by \texttt{SAE-P}
            \State Observe feedback and build for all $q\in \mathcal{B}^B$ such that $q\ge q_t$:
            \[\gft_{2,t}(p_t,q)\gets (\bar{b}_t-p_t)\mathbb{I}(\bar{b}_t\ge q)\mathbb{I}(s_t\le p_t)\]  \[\rev_t(p_t,q) \gets (\max\{q,\underline{b}_t\}-p_t)\mathbb{I}(\bar{b}_t\ge q)\mathbb{I}(s_t\le p_t)\] 
            \State Build for all $(p,q)\in \mathcal{B}$ such that $q\ge q_t$ and $p=p_t$ 
            \[c_t(p,q)\gets \frac{1}{1+\bar\zeta}\left(-\rev_t(p,q) - \bar\zeta \right)\]\[r_t(p,q)\gets\frac{1}{2} \left(\overline{\gft}_1(p,q)+ \gft_{2,t}(p,q)\right)\]
            \State Update \texttt{SAE-P} with $c_t,r_t$
            \EndFor
	\EndIf
    \end{algorithmic}
\end{algorithm}

Combining all these components, we can prove that \Cref{alg: GBB not ind} guarantees regret of order $\BigOL{T^{\frac{2}{3}}}$ and satisfies the \GBB contraint. Formally:

\begin{restatable}{theorem}{gbbtheo}
    Assume that $\mathcal{P}= \mathcal{P}^S\times \mathcal{P}^B$ or that $\mathcal{P}$ has bounded density.
    Then, for any $\delta>0$, \Cref{alg: GBB not ind} with $K=\BigOL{T^\frac{1}{3}}$, $T_0=\BigOL{T^\frac{2}{3}}$, $\beta= \BigOL{T^{2/3}}$, and $\bar\zeta =\BigOL{T^{-1/3}}$ guarantees:
	\begin{equation*}
		R_T = \BigOL{T^\frac{2}{3}} \quad \textnormal{and} \quad \sum_{t=1}^T\rev_t(p_t,q_t) \ge 0,
	\end{equation*} 
    with probability at least $1-6\delta$.
\end{restatable}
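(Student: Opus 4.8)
The plan is to decompose the analysis of Algorithm~\ref{alg: GBB not ind} into its three phases and to track both the accumulated revenue and the accumulated regret through each phase, then glue everything together. First I would condition on the intersection of all the high-probability events used by the sub-routines: the correctness of the adaptive grid from \Cref{alg: grid estimation} (i.e.\ the conclusions of \Cref{lemma: gridgbb} and \Cref{lemma: -revk}, so that a near-optimal feasible pair $(p_{k^*},q_{j^*})\in\mathcal{B}$ exists with $\gft(p^*,q^*)-\gft(p_{k^*},q_{j^*})=\BigOL{1/K+1/\sqrt{T_0}}$ and $\zeta=-\rev(p_{k^*},q_{j^*})=\BigOL{1/K}$ under either distributional assumption), the revenue guarantee of \texttt{maxREV} from \Cref{lemma: maxrevhedge}, the Hoeffding concentration of $\overline{\gft}_1$ on all $|\mathcal{B}|=K^2$ grid points, and the two guarantees of \texttt{SAE-P} from \Cref{theoremGMAB}. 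A union bound over these (with the stated failure probabilities) gives the overall $1-6\delta$.

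Next I would handle the \GBB (revenue) constraint. In phase~1, \texttt{maxREV} is run until it has banked revenue $\beta=\BigOL{T^{2/3}}$; in phase~2, the $T_0=\BigOL{T^{2/3}}$ rounds of uniform exploration play $(U_t,0)$ which can lose at most $T_0$ revenue in the worst case, but actually I only need a crude $\BigOL{T_0}$ bound here matched against $\beta$. In phase~3, by construction the cost fed to \texttt{SAE-P} is the shifted negative revenue $c_t(p,q)=\frac{1}{1+\bar\zeta}(-\rev_t(p,q)-\bar\zeta)$, so the true per-round revenue is $\rev_t(p_t,q_t)=-(1+\bar\zeta)c_t(p_t,q_t)-\bar\zeta$. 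Summing and using $\sum_t[c(a_t)]^+\le\BigOL{\sqrt{nT}\log(m/\delta)}$ (here $n=|\mathcal{B}^S|=\BigOL{K}=\BigOL{T^{1/3}}$, $m=\BigOL{K}$), together with $c(a_{k^*,j^*})\le 0$ by the choice $\bar\zeta\ge\zeta$ and \Cref{lemma: -revk}, one shows the total revenue loss in phase~3 is at most $\bar\zeta\cdot T+\BigOL{\sqrt{nT}\log(m/\delta)}=\BigOL{T^{2/3}}$ (using $\bar\zeta=\BigOL{T^{-1/3}}$). Choosing the constant in $\beta$ large enough so that $\beta$ dominates the phase-2 loss plus the phase-3 loss gives $\sum_{t=1}^T\rev_t(p_t,q_t)\ge 0$, and also guarantees (via \Cref{lemma: maxrevhedge}) that \texttt{maxREV} terminates at some $\tau<T-T_0$, so phases~2 and~3 actually run.

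Finally I would bound the regret $R_T=T\cdot\opt-\mathbb E[\sum_t\gft_t(p_t,q_t)]$ by summing the per-phase contributions against the benchmark $\opt=\gft(p^*,q^*)$. Phase~1 contributes at most $\BigOL{\beta+T^{2/3}}$ by \Cref{lemma: maxrevhedge} / \Cref{lemma: rev opt disc} (the $6\log T$-approximation of $\gft(p_{k^*},q_{j^*})$ by the optimal revenue on $F_K$ is what lets us charge the revenue-accumulation rounds to regret); phase~2 contributes at most $T_0=\BigOL{T^{2/3}}$ trivially. For phase~3, replacing the benchmark $\gft(p^*,q^*)$ by the grid point $\gft(p_{k^*},q_{j^*})$ costs $\BigOL{T(1/K+1/\sqrt{T_0})}=\BigOL{T^{2/3}}$ by \Cref{lemma: gridgbb}; then, since the reward handed to \texttt{SAE-P} is $r_t(p,q)=\frac12(\overline{\gft}_1(p,q)+\gft_{2,t}(p,q))$ whose expectation (using \Cref{lemma: gft decomposition} and the $\overline{\gft}_1$ confidence bound) is within $\BigOL{1/\sqrt{T_0}}$ of $\frac12\gft(p,q)$, and $(p_{k^*},q_{j^*})$ is a feasible arm for the constrained MAB, \Cref{theoremGMAB} gives regret $\BigOL{\sqrt{nT}\log(m/\delta)}=\BigOL{\sqrt{T^{1/3}\cdot T}\,\textnormal{polylog}}=\BigOL{T^{2/3}}$ against that arm, and the factor $2$ and the $\BigOL{T/\sqrt{T_0}}$ estimation slack are both $\BigOL{T^{2/3}}$. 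Adding the three phases yields $R_T=\BigOL{T^{2/3}}$.

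The main obstacle I expect is phase~3: one must verify simultaneously that (i) the monotonicity hypothesis of \Cref{theoremGMAB} genuinely holds for the cost vectors $c_t(p,q)$ along each line $\{(p,q):q\in\mathcal B^B,q\ge\text{something}\}$ (this is the lemma $\rev_t(p,q)\ge\rev_t(p,q')$ for $q'\le q$, suitably reindexed so larger index means smaller reserve), (ii) the optimal feasible grid arm $(p_{k^*},q_{j^*})$ is never eliminated — which relies precisely on monotone costs plus $\bar\zeta\ge\zeta$ so that $\underline c_t$ never certifies it infeasible — and (iii) the shift by $\bar\zeta$ and the rescaling by $1/(1+\bar\zeta)$ keep costs in $[-1,1]$ while introducing only a $\bar\zeta T=\BigOL{T^{2/3}}$ additive revenue slack. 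Tying the grid-approximation error, the $\gft_1$-estimation error, the feedback-graph regret, and the budget bookkeeping into one consistent choice of constants for $K,T_0,\beta,\bar\zeta$ is the delicate part; everything else is routine concentration and summation.
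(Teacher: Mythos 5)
Your proposal is correct and follows essentially the same route as the paper: the same phase-by-phase decomposition (grid construction, \texttt{maxREV} charged via \Cref{lemma: rev opt disc}/\Cref{lemma: maxrevhedge}, a trivial $T_0$ bound for uniform exploration, grid approximation via \Cref{lemma: gridgbb}, and \texttt{SAE-P} with the $\bar\zeta$-shifted costs analyzed through \Cref{theoremGMAB} and \Cref{lemma: -revk}), and the same revenue bookkeeping in which $\beta=\BigOL{T^{2/3}}$ absorbs the exploration losses and the $\bar\zeta T+\BigOL{\sqrt{KT}}$ positive-violation slack. The obstacles you flag (cost monotonicity along lines, non-elimination of $(p_{k^*},q_{j^*})$, and keeping the shifted costs bounded) are exactly the points the paper's proof handles, so no gap remains.
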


We remark that in the theorem we highlight only the dependence of $T$. For instance, we omitted the dependence from $M$ for bounded density distributions. See \Cref{app:GBB} for more details.

\section{Open Problem: Assumption-Free  Regret Minimization With \GBB Constraints} \label{sec:open}

In the previous section, we design a \GBB algorithm under some assumptions on the underling valuation distribution.
However, it remains an open question whether a sublinear regret bound can be achieve without these assumptions. 

We conclude that paper providing some evidence that our approach is doomed to fail, and designing such algorithm requires novel non-trivial techniques. Specifically, the following example shows that it is impossible to learn an approximately optimal couple of \emph{fixed} prices $(p,q)$.
Since this is the aim of our algorithm (see \Cref{sec:GBBalgorithm}), it is clear that our algorithm cannot be easily extended to settings without assumptions.
Notice that this does not rule out the existence of an algorithm achieving sublinear regret. However, such an algorithms would be much more complex as it should dynamically adjust $q_t$ based on the buyers valuations $\bvec_t$. 

\begin{example}\label{ex:impossible}
We construct a family of distributions where learning the optimal price pair $(p^*,q^*)$ is as hard as finding an arbitrary small segment on a line. 
In particular, consider a family of seller/buyers distributions $\mathcal{P}_\epsilon^x$ parametrized by $x \in (7/16,9/16)$ and $\epsilon\in(0,1/16)$ described as follows: 
\begin{align*}
\mathbb{P}((s_t,\bar{b}_t,\underline{b}_t)=(a,y,y'))=
\begin{cases}
1/3 & \text{if } (a,y,y') = (x-\epsilon,3/4,0) \\
1/3 & \text{if }  (a,y,y') = (x+\epsilon,1/4,0)\\
1/3 & \text{if }  (a,y,y') = (0,1/4,0)\\
0 & \text{otherwise}
\end{cases}
\end{align*}
For valuations $(x-\epsilon,3/4,0)$ the trade happen if $ (p,q)\in [x-\epsilon,1]\times[0,3/4]$ with $\gft=3/4-x+\epsilon>3/16$, for  valuations $(x+\epsilon,1/4,0)$ the trade happens if $(p,q)\in [x+\epsilon,1]\times [0,1/4]$ with $\gft=1/4-x-\epsilon<-3/16$, and for valuations $(0,1/4,0)$ the trade happens if $(p,q)\in [0,1]\times [0,1/4]$ with $\gft=1/4$. Thus, every couple of prices $(p,q)\notin[x-\epsilon,x+\epsilon)\times [0,1/4]$  leads to an expected sub optimality gap of at least $1/16$. Hence, finding an approximately optimal mechanisms requires identify the arbitrarily small interval $[x-\epsilon,x+\epsilon)$ with one-bit feedback, \emph{i.e.}, if the seller accept the trade. This is equivalent to a binary search problem, which requires $\Omega\left(\log(\frac{1}{\epsilon})\right)$ samples.
Such problem requires an arbitrary large number of samples if $\epsilon$ is arbitrary small.
\end{example}

We leave open the question of whether this problem remains  hard for the broader class of mechanisms available to our regret minimizer or if more sophisticated mechanisms can circumvent this impossibility result.

\section*{Acknowledgments}
This paper is supported by the Italian MIUR PRIN 2022 Project ``Targeted Learning Dynamics:
Computing Efficient and Fair Equilibria through No-Regret Algorithms'', by the FAIR (Future
Artificial Intelligence Research) project, funded by the NextGenerationEU program within the PNRRPE-AI scheme (M4C2, Investment 1.3, Line on Artificial Intelligence), and by the EU Horizon project
ELIAS (European Lighthouse of AI for Sustainability, No. 101120237).

\bibliographystyle{ACM-Reference-Format}
\bibliography{References}

\clearpage
\appendix

\section{Proofs Omitted from \Cref{sec:decomposition}}

\gftdecompose*
\begin{proof}
	For all posted couple of prices $(p,q)\in [0,1]^2$ and realized valuations $(\bvec,s)$, the gain from trade can be rewritten as
	\begin{align*}
		(\bar{b}_t-s_t)\mathbb{I}(\bar{b}_t\ge q)\mathbb{I}(s_t\le p) & = (\bar{b}_t-p+p-s_t)\mathbb{I}(\bar{b}_t\ge q)\mathbb{I}(s_t\le p)\\
		& = (\bar{b}_t-p)\mathbb{I}(\bar{b}_t\ge q)\mathbb{I}(s_t\le p)+(p-s_t)\mathbb{I}(\bar{b}_t\ge q)\mathbb{I}(s_t\le p).
	\end{align*}
	Taking the expectation over $(\bvec,s)\sim \mathcal{P}$ we get
	\begin{equation*}
		\mathbb{E}[	\gft_t(p,q)] = \gft_1(p,q) + \gft_2(p,q).
	\end{equation*}
    This prove the first part of the statement.
    
    Now, observe that the probability that a random variable $U$ drawn  uniformly from $[0,1]$ is in an interval $[x,y]$, $x,y\in [0,1]$, is \[\mathbb{P}(x\le U\le y)=x-y.\]
    Therefore, recalling that we defined as $\mathcal{U}$ the uniform distribution over $[0,1]$, for each valuations $(\bvec,s)$ it holds
	\begin{align*}
		\mathbb{P}_{U\sim \mathcal{U} }(s\le U\le p,\bar{b}\ge q)& =  (p-s) \mathbb{I}(s\le p,\bar{b}\ge q). 
	\end{align*}
    Taking the expectation over $(\bvec,s)\sim \mathcal{P}$ and recalling the definition of $\gft_1(p,q)$ the statement easily follows.
\end{proof}

\gftdecomposeSBB*
\begin{proof}
This result can be proven by a completely analogous proceeding to the proof of Lemma~\ref{lemma: gft decomposition}.
\end{proof}

\section{Proofs Omitted from \Cref{sec:grid}} \label{app:grid}

Before proving the results on our discretization grids, we state this useful concentration bound from~\citet{cesa2024bilateral}

\begin{theorem}[Theorem 14 in \citep{cesa2024bilateral}]
\label{lemma: DKW}
    If $(\Omega,\mathcal{F},\mathbb{P})$ is a probability space and $(X_n)_{n\in \mathbb{N}}$ is a $\mathbb{P}$-i.i.d. sequence of random variables, then , for any $\epsilon>0$ and all $m\in \mathbb{N}$, it holds
    \begin{equation*}
		\mathbb{P}\left[\sup_{x\in \mathbb{R}}\bigg\lvert\frac{1}{m}\sum_{k=1}^m\mathbb{I}\{X_k\le x\}-\mathbb{P}[X_1\le x]\bigg\rvert > \epsilon\right] \le 2 \exp (-2 m \epsilon^2).
	\end{equation*}
\end{theorem}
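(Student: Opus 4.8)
The statement is the Dvoretzky–Kiefer–Wolfowitz inequality with the optimal Massart constant, so the cleanest route in the paper is simply to invoke it from \citet{cesa2024bilateral} (equivalently, from Massart's sharpening of the original DKW bound). For a self-contained argument I would set up two standard reductions and then isolate the one delicate step. First, \emph{reduce to the uniform case}: if $F$ is the common distribution function of the $X_n$, then the generalized inverse $F^{-1}$ maps $U\sim\textnormal{Uniform}[0,1]$ to a variable with law $F$, and for each fixed $x$ the map $u\mapsto \mathbb{I}\{F^{-1}(u)\le x\}$ is monotone, so $\sup_{x}\bigl|\frac1m\sum_k\mathbb{I}\{X_k\le x\}-\mathbb{P}[X_1\le x]\bigr|$ has the same distribution as $\sup_{t\in[0,1]}|G_m(t)-t|$, where $G_m$ is the empirical distribution function of $m$ i.i.d. uniforms on $[0,1]$; the atoms of $F$ only shrink the effective range of $t$ and need a one-line separate check. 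Hence it suffices to bound $\mathbb{P}\bigl[\sup_{t\in[0,1]}|G_m(t)-t|>\epsilon\bigr]$.

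Second, \emph{split into one-sided deviations}. Since $\sup_t|G_m(t)-t|=\max\bigl\{\sup_t(G_m(t)-t),\ \sup_t(t-G_m(t))\bigr\}$ and the two one-sided suprema are equal in law (replace $U_k$ by $1-U_k$ and pass to left limits), a union bound gives
\[
\mathbb{P}\Bigl[\sup_t|G_m(t)-t|>\epsilon\Bigr]\ \le\ 2\,\mathbb{P}\Bigl[\sup_t\bigl(G_m(t)-t\bigr)>\epsilon\Bigr].
\]
This already explains the factor $2$ in the statement: it only remains to prove the one-sided bound $\mathbb{P}\bigl[\sup_t(G_m(t)-t)>\epsilon\bigr]\le e^{-2m\epsilon^2}$, with no prefactor.

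Third, and this is the crux, prove the one-sided inequality. Writing $U_{(1)}\le\cdots\le U_{(m)}$ for the order statistics, one has $\sup_t\bigl(G_m(t)-t\bigr)=\max_{1\le k\le m}\bigl(k/m-U_{(k)}\bigr)$, which is a boundary-crossing functional of the uniform empirical process whose exact distribution is given by a ballot-type (Smirnov / Daniels–Dwass–Takács) combinatorial formula; bounding that closed form — or, equivalently, running an exponential supermartingale / optional-stopping argument on the sequence $k\mapsto k/m-U_{(k)}$ tuned to the rate $2m\epsilon^2$ — yields the claim. The main obstacle is exactly this: the elementary argument (the supremum is attained at one of the $m$ jump points, apply Hoeffding at each and union-bound) only delivers $2m\,e^{-2m\epsilon^2}$, and eliminating the polynomial-in-$m$ prefactor is Massart's contribution; heuristically the sharp constant matches $\mathbb{P}\bigl[\sup|\text{Brownian bridge}|>\lambda\bigr]\le 2e^{-2\lambda^2}$ via the reflection principle, which is the asymptotic shadow of the finite-$m$ statement. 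I note that for the present paper even the weaker $2m\,e^{-2m\epsilon^2}$ bound would be sufficient, since the extra $\operatorname{poly}(m)$ factor is harmless once $m$ is polynomial in $T$ and logarithmic factors are suppressed; so an elementary self-contained version is available if one does not insist on the optimal constant.
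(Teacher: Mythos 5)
The paper does not prove this statement at all: it is imported verbatim as Theorem~14 of \citet{cesa2024bilateral} (Massart's form of the DKW inequality), so the ``proof'' is citation, which is exactly your primary recommendation. Your supplementary sketch of the reductions and of where the sharp constant comes from is accurate, and you correctly flag that the one-sided bound without a polynomial prefactor is the genuinely hard step (and that the weaker $2m\,e^{-2m\epsilon^2}$ bound would suffice for this paper's $\tilde{\mathcal{O}}$ guarantees), but none of this machinery is needed here.
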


\gridsbb*

\begin{proof}
Let $q^*\in [0,1]$ be the price that maximizes the gain from trade.
If $q^*$ belong to the grid $\mathcal{B}^B$ the statement trivially follows.
Otherwise, by the definition of $\mathcal{B}^B$, there exists a $q_k\in \mathcal{B}^B$ such that $\frac{1}{T_0}\sum_{t=1}^{T_0}\mathbb{I}(  q^*\le \bar{b}_t< q_k )\le \frac{1}{T_0 }\frac{T_0}{ K}=\frac{1}{K}$.
Then, applying \Cref{lemma: DKW} with random variable $X_t=\bar{b}_t$ 
we obtain:
\[\bigg| \mathcal{P}_{(\bvec,s)\sim \mathcal{P}}( \bar b\le q^*)- \frac{1}{T_0}\sum_{t=1}^{T_0}\mathbb{I}(\bar{b}_t\le q^* )  \bigg|\le \sqrt{\frac{\ln(\frac{2}{\delta})}{2 T_0}}   \]
and
\[\bigg| \mathcal{P}_{(\bvec,s)\sim \mathcal{P}}( \bar b\le q_k)- \frac{1}{T_0}\sum_{t=1}^{T_0}\mathbb{I}(\bar{b}_t\le q_k )  \bigg|\le \sqrt{\frac{\ln(\frac{2}{\delta})}{2 T_0}}       \]
with probability at least $1-\delta$.

Hence,
\begin{align} \label{eq:close}
\mathcal{P}_{(\bvec,s)\sim \mathcal{P}}(\bar{b}_t\le q_k)- \mathcal{P}_{(\bvec,s)\sim \mathcal{P}}(\bar{b}_t\le q^*) &\le \frac{1}{T_0}\sum_{t=1}^{T_0}\mathbb{I}(  q^*\le \bar{b}_t< q_k ) + 2 \sqrt{\frac{\ln(\frac{2}{\delta})}{2 T_0}}\nonumber\\ &\le \frac{1}{K}+ 2 \sqrt{\frac{\ln(\frac{2}{\delta})}{2 T_0}}.
\end{align}

Now, let's split the valuations $(\bvec,s)$ into three sets:
\begin{itemize}
    \item $V_1$ such that $\bar b \ge q_k$,
    \item $V_2$ such that $\bar b \in [q^*,q_k)$,
    \item $V_3$ such that $\bar b < q^*$.
\end{itemize}

It is easy to see that when the sampled valuations belongs to $V_1$ the gain from trade of $q_k$ is larger or equal to the one of $q^*$.
Moreover, when the sampled valuations belong to $V_3$ the gain from trade of both mechanisms is $0$.
Finally, when the sampled valuations belong to $V_2$ the mechanism $q^*$ might get a positive gain from trade (at most 1), while with $q_k$ the trade does not happen (with gain from trade $0$.
Hence, we can bound
\[ \gft(q^*)-\gft(q_k)\le P_{(\bvec,s)\sim \mathcal{P}}((\bvec,s)\in V_2)\le \frac{1}{K}+ 2 \sqrt{\frac{\ln(\frac{2}{\delta})}{2 T_0}},  \]
where the last inequality follows from \Cref{eq:close}

\end{proof}

\gridgbb*
\begin{proof}

    We bound the gain from trade in two steps. First, we show how a $(p_{k^*},q^*)$ on the grid well approximate the optimal couple $(p^*,q^*)$.
    The proof of this statement is split into two different proofs depending on the underlying assumption.
    Then, we provide a joint argument for the approximation of $(p_{k^*},q^*)$ with a $(p_{k^*},q_{j^*})$ on the grid. 
	
	\paragraph{ Independent distributions ($\mathcal{P}=\mathcal{P}^S \times \mathcal{P}^B$)} 
    We focus on the non-trivial case in which it does \emph{not} exists a couple of optimal prices $p^*,q^*$ such that $p^*\in \mathcal{B}^S$.
    Consider then the optimal prices such that the probability density function of the valuations of the sellers is strictly greater than zero in $p^*$\footnote{By convention we say that a non-absolute random variable $X$ that has non-zero probability in the set $\{x\} $ of null Lebesgue measure present a density function $f_X(x)=+\infty$}. Observe that such $p^*$ must exists if $p^*=0\in \mathcal{B}^S$ is \textbf{not} an optimal price. Then, to be optimal $(p^*,q^*)$ must be such that it exist an $\epsilon'\in (0,1)$ such that $\forall\epsilon\in (0,1)$ such that $\epsilon<\epsilon'$  \begin{equation*}
        \int_{p^*-\epsilon}^{p^*}\mathbb{E}[(\bar{b}-s)\mathbb{I}(\bar{b}\ge q^*)|s]f_s(s)ds\ge 0,
    \end{equation*} 
    since all prices $[0,p^*]\times \{q^*\}$ are feasible with expected revenue greater or equal to the expected revenue of $p^*,q^*$. This implies, as $\epsilon$ is arbitrarily small and the function $g:[0,1]\rightarrow [0,1]$ defined as $g(x)\coloneq \mathbb{E}[(\bar{b}-x)\mathbb{I}(\bar{b}_t\ge q^*)]$ is continuous, that 
    \begin{equation*}
        \mathbb{E}[(\bar{b}-s)\mathbb{I}(\bar{b}\ge q^*)|s=p^*]f_s(p^*)\ge0,
    \end{equation*}
    and finally, since $f_s(p^*)>0$ 

    \begin{align}\label{eq:posGFT}
    \mathbb{E}[(\bar{b}-p^*)\mathbb{I}(\bar{b}\ge q^*)] \ge 0.
    \end{align}

    Hence:

    \[\gft(p^*,q^*) =\mathbb{E}_{(\bvec,s)\sim \mathcal{P}} [(\bar{b}-s) \mathbb{I}(\bar{b}\ge q^*, s\le p^*)]\ge 0. \]

    Thus,

    \begin{align*}
        \mathbb{E}[\gft(p^*,q^*)&-\gft(p_{k^*},q^*)]\\
        &=  -\mathbb{E}[(\bar{b}-s)\mathbb{I}( p^*\le s < p_{k^*} )\mathbb{I}(\bar{b}\ge q^*)]  \\
        & \le  -\mathbb{E}[(\bar{b}-p_{k^*})\mathbb{I}(p^*\le s < p_{k^*})\mathbb{I}(\bar{b}\ge q^*)]\\
        & \le  -\mathbb{E}[(\bar{b}-p^*)\mathbb{I}(\bar{b}_t\ge q^*)] \mathbb{P}(p^*\le s < p_{k^*})  \\
        & - \mathbb{E}[(p^*-p_{k^*})\mathbb{I}(\bar{b}_t\ge q^*)] \mathbb{P}(p_{k^*} < s \le p^* )\\
        & \le 0 + \frac{1}{K}, 
    \end{align*}
    where the first inequality follows since the function is non-zero only for $p^* \le s < p_{k^*}$ and the last from \Cref{eq:posGFT} and since $-(p^*-p_{k^*})\le \frac{1}{K}$ by construction.

  \paragraph{$\mathcal{P}$ has density bounded by $M$} Similarly to the independent distribution case, we observe that the gain from trade of $(p^*,q^*)$ and $(p_{k^*},q^*)$ differs only if  $ p_{k^*} < s \le p^*$.
  Here, we exploit the bounded density assumption to derive
  \[ \mathbb{P}_{(\bvec,s)\sim \mathcal{P}}( p_{k^*} < s \le p^*)\le M/K. \]
  Hence:
	\begin{align*}
		\mathbb{E}[\gft_t(p^*,q^*)-\gft_t(p_{k^*},q^*)]   \le \mathbb{P}(p^*\le  s_t< p_{k^*}  ) \le \frac{M}{K}.
	\end{align*}

    Now, we show that $(p_{k^*},q_{j^*})$ has a gain from trade similar to $(p_{k^*},q^*)$ concluding the proof.
    Consider the non-trivial case $q^*\notin \mathcal{B}^B$.
    Then, similarly as we proved in \Cref{eq:close} of \Cref{lemma: gridsbb},
    we can prove that with probability at least $1-\delta$ it holds:
    \begin{align*} 
\mathcal{P}_{(\bvec,s)\sim \mathcal{P}}(\bar{b}_t\le q^*)- \mathcal{P}_{(\bvec,s)\sim \mathcal{P}}(\bar{b}_t\le q_{j^*}) \le \frac{1}{K}+ 2 \sqrt{\frac{\ln(\frac{2}{\delta})}{2 T_0}}.
\end{align*}
Moreover, similarly to the proof of \Cref{lemma: gridsbb}  we observe that the gain from trade of the two mechanisms differs only when $\bar b \in [q^*,q_{j^*})$, and such difference is at most $1$. 
Hence,

	\begin{align*}
		\gft(p_{k^*},q^*)-\gft_t(p_{k^*},q_{j^*}) \le P_{(\bvec,s)\sim \mathcal{P}}(\bar b \in [q^*,q_{j^*}))  \le \frac{1}{K} + \sqrt{\frac{\ln(\frac{2}{\delta})}{2 T_0}} ,
	\end{align*}
\end{proof}

Next, we show that has $(p_{k^*},q_{j^*})$ has a not too negative revenue, making it a viable option in the design of \GBB mechanisms.

\lemmaRevAlmostPositive*
\begin{proof}
    We divide the proof in two parts depending on the assumption on the distribution.

    \paragraph{ Independent distributions ($\mathcal{P}=\mathcal{P}^S \times \mathcal{P}^B$)}
    First, we notice that, since $(p^*,q^*)$ has expected non-negative revenue and sellers and buyers valuations are independent, then
    \begin{align*}
        0 &\ge \rev(p^*,q^*) = \mathbb{E}_{(s,\bvec)\sim \mathcal{P}}[(\max\{\underline{b},q^*\}-p^*)\mathbb{I}(s\le p^*)\mathbb{I}(\bar{b}\ge q^*)]\\
        & = \mathbb{E}_{(s,\bvec)\sim \mathcal{P}}[(\max\{\underline{b},q^*\}-p^*)|s\le p^*,\bar{b}\ge q^*]\mathbb{P}(s\le p^*)\mathbb{P}(\bar{b}\ge q^*)\\
        & = \mathbb{E}_{\bvec\sim \mathcal{P}^B}[(\max\{\underline{b},q^*\}-p^*)|\bar{b}\ge q^*]\mathbb{P}(s\le p^*)\mathbb{P}(\bar{b}\ge q^*)].
    \end{align*}

	which implies that either
	\begin{align}\label{eq:positiveRev}
		\mathbb{E}_{\bvec\sim \mathcal{P}^B}\left[\max\{\underline{b},q^*\}-p^*|\bar{b}_t\ge q^*\right]\ge 0,
	\end{align}
    or $\mathbb{P}(s_t\le p^*)\mathbb{P}(\bar{b}_t\ge q^*)=0$. Notice that the second case is trivial as it would imply $\mathbb{P}(s_t\le p_{k^*})\mathbb{P}(\bar{b}_t\ge q_{j^*})=0$ and therefore $\rev(p_{k^*},q_{j^*})=0$.

	Thus, considering the non-trivial case $\mathbb{P}(s_t\le p^*)\mathbb{P}(\bar{b}_t\ge q^*)>0$, with probability at least $1-\delta$ it holds
	\begin{align*} 
	&\rev(p_{k^*},q_{j^*})-\rev(p^*,q^*)\\
        & = \mathbb{E}[(\max\{q_{j^*},\underline{b}\}-p_{k^*})\mathbb{I}(\bar{b}\ge q_{j^*})\mathbb{I}(s\le p_{k^*})] -\mathbb{E}[(\max\{q^*,\underline{b}\}-p^*)\mathbb{I}(\bar{b}\ge q^*)\mathbb{I}(s\le p^*)]\\
        & \ge \mathbb{E}[(\max\{q^*,\underline{b}\}-p^*)\mathbb{I}(\bar{b}\ge q_{j^*})\mathbb{I}(s\le p_{k^*})] + (p^*-p_{k^*}) -\mathbb{E}[(\max\{q^*,\underline{b}\}-p^*)\mathbb{I}(\bar{b}\ge q^*)\mathbb{I}(s\le p^*)]\\
        &  \ge \mathbb{E}[(\max\{q^*,\underline{b}\}-p^*)\mathbb{I}(\bar{b}\ge q^*)\mathbb{I}(s\le p_{k^*})] - \mathbb{E}[(\max\{q^*,\underline{b}\}-p^*)\mathbb{I}(q^*\le \bar{b}< q_{j^*})\mathbb{I}(s\le p_{k^*})]\\
        &\hspace{6cm}+ (p^*-p_{k^*})-\mathbb{E}[(\max\{q^*,\underline{b}\}-p^*)\mathbb{I}(\bar{b}\ge q^*)\mathbb{I}(s\le p^*)]\\
        & \ge \mathbb{E}[(\max\{q^*,\underline{b}\}-p^*)\mathbb{I}(\bar{b}\ge q^*)]\mathbb{P}(s\le p_{k^*})-\mathbb{P}(q^*\le \bar{b}< q_{j^*}) \\
        &\hspace{6cm}+(p^*-p_{k^*}) -\mathbb{E}[(\max\{q^*,\underline{b}\}-p^*)\mathbb{I}(\bar{b}\ge q^*)]\mathbb{P}(s\le p^*)\\
        & = \mathbb{E}[(\max\{q^*,\underline{b}\}-p^*)\mathbb{I}(\bar{b}\ge q^*)]\left(\mathbb{P}(s\le p_{k^*})-\mathbb{P}(s\le p^*)\right)-\mathbb{P}(q^*\le \bar{b}< q_{j^*}) +(p^*-p_{k^*})\\
        & \ge 0-\frac{1}{K}-2\sqrt{\frac{\ln(\frac{2}{\delta})}{2T_0}}-\frac{1}{K}\\
	& = -\frac{2}{K}-2\sqrt{\frac{\ln(\frac{2}{\delta})}{2T_0}},
	\end{align*}
	where we used $p^*\le p_{k^*}$ and $q^*\le q_{j^*}$, and in the last inequality \Cref{eq:positiveRev}.
    
    \paragraph{$\mathcal{P}$ has density bounded by $M$}
    Here, we simply observe that 
    \begin{align*}
	&\rev(p_{k^*},q_{j^*})-\rev(p^*,q^*)\\
        &\hspace{0.5cm} = \mathbb{E}[(\max\{q_{j^*},\underline{b}\}-p_{k^*})\mathbb{I}(\bar{b}\ge q_{j^*})\mathbb{I}(s\le p_{k^*})] -\mathbb{E}[(\max\{q^*,\underline{b}\}-p^*)\mathbb{I}(\bar{b}\ge q^*)\mathbb{I}(s\le p^*)]\\
        &\hspace{0.5cm} \ge - \mathbb{P}(q^* \le \bar{b}_t \le q_{j^*}) - \mathbb{P}(p^* \le s_t \le p_{k^*})\\
        &\hspace{0.5cm}  \ge -\frac{2M}{K}.
	\end{align*}

\end{proof}

\revdisc*
\begin{proof}
    In the following, to make the notation more agile we let $p=p_{k^*}$ and $q=q_{j^*}$.
    If $p\ge q$ then, we observe that
	\begin{align*}
		\mathbb{E}[\left(\bar{b}-p\right)\mathbb{I}(\bar{b}\ge q)\mathbb{I}(s\le p)]
		& \le \mathbb{E}[\left(\bar{b}-p\right)\mathbb{I}(p >\bar{b} \ge q )\mathbb{I}(s\le p)] + \mathbb{E}[\left(\bar{b}-p\right)\mathbb{I}(\bar{b} \ge p )\mathbb{I}(s\le p)]\\
		& \le E[ \left(\bar{b}-p\right)\mathbb{I}(\bar{b} \ge p )\mathbb{I}(s\le p)]\\
		& \le \sum_{j=0}^{\log{T}} E\left[ \left(\bar{b}-p\right) \mathbb{I}\left(p+\frac{1}{2^{j}}\le \bar{b} \le p+\frac{1}{2^{j-1} }\right)\mathbb{I}(s\le p)\right] + \frac{1}{T}\\
		& \le 2\sum_{j=0         }^{\log{T} } \mathbb{E}\left[ \frac{1}{2^j}\mathbb{I}\left(q+\frac{1}{2^{j}}\le \bar{b}_t \le q+\frac{1}{2^{j-1} }\right)\mathbb{I}(s_t\le p) \right]+ \frac{1}{T}\\
		& \le 3\log(T) \max_{(p',q')\in F_K^+}\rev(p',q') + \frac{1}{T},
	\end{align*}
	where the last inequality holds by observing that $\mathbb{I}(p+\frac{1}{2^{j}}\le \bar{b} \le p+\frac{1}{2^{j-1} })=1$ implies that posting price $(p,p+2^{-j})$, which belongs to $F_K^+$ by construction, the obtained revenue is $2^{-j}$.

    We apply an analogous reasoning for
	\begin{align*}
		\mathbb{E}[\left(\max\{q,\underline{b}\}-s\right)&\mathbb{I}(\bar{b}\ge q)\mathbb{I}(s\le p)]\\
		& \le \mathbb{E}[\left(\max\{q,\underline{b}\}-s\right) \mathbb{I}(\bar{b}\ge q)\mathbb{I}(\max\{q,\underline{b}\}<s\le p)] \\
		& \hspace{3cm}+  \mathbb{E}\left[\left(\max\{q,\underline{b}\}-s\right) \mathbb{I}(\bar{b}\ge q)\mathbb{I}(s\le \max\{q,\underline{b}\})\right]\\
		& \le\sum_{j=0}^{\log{T} }\mathbb{E} \left[ \left(\max\{q,\underline{b}\} - s\right) \mathbb{I}\left(\max\{q,\underline{b}\}-\frac{1}{2^{j-1}}\le s \le q-\frac{1}{2^{j}}\right)\right] + \frac{1}{T}\\
		& \le 2 \sum_{j=0}^{\log{T} } \mathbb{E}\left[ 2^{-j} \mathbb{I}\left(\max\{q,\underline{b}\}-\frac{1}{2^{j-1}}\le s \le \max\{q,\underline{b}\}-\frac{1}{2^{j}}\right)\right] + \frac{1}{T}\\
		& \le 3\log(T) \max_{(p',q')\in F_K^-}\rev(p',q') + \frac{1}{T},
	\end{align*}
    where the last inequality holds since if $\mathbb{I}(\max\{q,\underline{b}_t\}-\frac{1}{2^{j-1}}\le s_t \le \max\{q,\underline{b}_t\}-\frac{1}{2^{j}})=1$ the revenue achieved by posting prices $(\max\{q,\underline{b}_t\}-2^{-j},q)$, which belongs to $F_K^-$, is at least $2^{-j}$.
    
     Similarly, if it holds instead that $p<q$, then 
    \begin{align*}
		\mathbb{E}\left[\left(\bar{b}-p\right)\mathbb{I}(\bar{b}\ge q)\mathbb{I}(s\le p)\right]& \le  \mathbb{E}\left[\left(\bar{b}-p\right)\mathbb{I}(\bar{b} \ge p )\mathbb{I}(s\le p)\right]\\
		& \le \sum_{j=0}^{\log{T}} E\left[ \left(\bar{b}-p\right) \mathbb{I}\left(p+\frac{1}{2^{j}}\le \bar{b} \le p+\frac{1}{2^{j-1} }\right)\mathbb{I}(s\le p)\right] + \frac{1}{T}\\
		& \le 2\sum_{j=0}^{\log{T} } \mathbb{E}\left[ \frac{1}{2^j}\mathbb{I}\left(q+\frac{1}{2^{j}}\le \bar{b}_t \le q+\frac{1}{2^{j-1} }\right)\mathbb{I}(s_t\le p) \right]+ \frac{1}{T}\\
		& \le 3\log(T) \max_{(p',q')\in F_K^+}\rev(p',q') + \frac{1}{T},
	\end{align*}
    as $p< q$ implies $\left(\bar{b}-p\right)\mathbb{I}(p\le\bar{b} < q )\mathbb{I}(s\le p)\ge 0$, and
    \begin{align*}
		\mathbb{E}[\left(\max\{q,\underline{b}\}-s\right)&\mathbb{I}(\bar{b}\ge q)\mathbb{I}(s\le p)]\\
		&\le  \mathbb{E}[\left(\max\{q,\underline{b}\}-s\right) \mathbb{I}(\bar{b}\ge q)\mathbb{I}(s\le \max\{q,\underline{b}\})]\\
		& \le\sum_{j=0}^{\log{T} }\mathbb{E} \left[ \left(\max\{q,\underline{b}\} - s\right) \mathbb{I}\left(\max\{q,\underline{b}\}-\frac{1}{2^{j-1}}\le s \le q-\frac{1}{2^{j}}\right)\right] + \frac{1}{T}\\
		& \le 2 \sum_{j=0}^{\log{T} } \mathbb{E}\left[ 2^{-j} \mathbb{I}\left(\max\{q,\underline{b}\}-\frac{1}{2^{j-1}}\le s \le \max\{q,\underline{b}\}-\frac{1}{2^{j}}\right)\right] + \frac{1}{T}\\
		& \le 3\log(T) \max_{(p',q')\in F_K^-}\rev(p',q') + \frac{1}{T}.
	\end{align*}

    Combining these results we get;
    \begin{subequations}
        \begin{align}
	   \gft(p,q)& = \mathbb{E}[\left(\bar{b}-s\right)\mathbb{I}(\bar{b}\ge q)\mathbb{I}(s\le p)] \nonumber\\
		&= \mathbb{E}[\left(\bar{b}-p\right)\mathbb{I}(\bar{b}\ge q)\mathbb{I}(s\le p)] + \mathbb{E} [\left(\max\{q,\underline{b}\}-s\right)\mathbb{I}(\bar{b}\ge q)\mathbb{I}(s\le p)] \nonumber \\
		&\hspace{4.35cm}+ \mathbb{E}[p-\max\{q,\underline{b}\})\mathbb{I}(\bar{b}\ge q)\mathbb{I}(s\le p)] \nonumber\\
		& \le \max_{(p',q')\in F_K} 6\log(T) \rev(p',q') + \frac{2}{T}- \rev(p,q) \label{eq: revmax eq2'}\\
		& \le  6\log(T)\max_{(p',q')\in F_K} \rev(p',q') + 2/T +  \zeta   \label{eq: revmax eq3'},
	\end{align}
    \end{subequations}

	where Inequality \eqref{eq: revmax eq2'} holds by definition of Revenue and the two previous results and Inequality~\eqref{eq: revmax eq3'} holds by the definition of $\zeta$.
  \end{proof}

\maxrevhedge*
\begin{proof}
    We can use UCB1 on the set of arms 
    $F_K$ to implement the \texttt{maxREV} algorithm.
    This guarantees that with probability at least $1-\delta$.
    \[ \sum_{t\in [\tau]}\rev(p_t,q_t)\ge \tau \rev(p^*_j,q^*_j) - \tilde O(T^{2/3}).  \]
    Moreover, an Azuma-Hoeffding inequality implies:
    \[ \sum_{t\in [\tau]} \rev_t(p_t,q_t)\ge  \sum_{t\in [\tau]}\rev(p_t,q_t) - \tilde O(\sqrt{T}),\]
    with probability at least $1-\delta$.

    Hence, taking an union bound and applying \cref{lemma: rev opt disc}, we get:
	\begin{align*}
        \tau \gft(p^*_j,q^*_j)&\le 6\log(T) \tau\max_{(p',q')\in F_K} \rev(p',q') + 2 + \tau \zeta\\ 
		& \le  6 \log(T) \sum_{t\in [\tau]} \rev(p_t,q_t) + \tilde O(T^{2/3}) + \tau \zeta \\
        & \le  6 \log(T) \sum_{t\in [\tau]} \rev_t(p_t,q_t) + \tilde O(T^{2/3}) + \tau \zeta.
	\end{align*}
   
\end{proof}

\section{Proofs Omitted from \Cref{sec:SBB}}


\sbbtheo*
\begin{proof}
In this proof we will use $q^*$ to denote the optimum price, \emph{i.e.} $q^*\in \argmax_{q\in[0,1]}\gft(q)$, and $q_{k^*}$ to denote the optimum price over the grid $\mathcal{B}$, \emph{i.e.} $q_{k^*}\in \argmax_{q\in \mathcal{B}}\gft(q)$.
	First we can decompose the regret in 5 different components representing respectively the regret induced by the approximation to the grid $\mathcal{B}$ ($R_K$), the regret generated in the pure exploration phase ($(1)$), the regret induced by the error of estimation of the actual $GFT$ (components$(2)$ and $(4)$) and the regret induced by the algorithm over the optimistic estimate of $GFT$ ($(3)$).
   
	Formally:
	\begin{align*} 
		R_T & = T\cdot \gft(q^*)-\sum_{t=1}^T\gft(q_t)\\
		& = \underbrace{\sum_{t=1}^{2T_0}(\gft(q^*)-\gft(q_t))}_{(1)} +  \underbrace{\sum_{t=2T_0+1}^{T}\left(\gft(q^*)-\gft(q_{k^*})\right)}_{R_K} \\
		&+ 
		 \underbrace{\sum_{t=2T_0+1}^{T}(\gft(q_{k^*})-\overline{\gft}_t(q_{k^*}))}_{(2)}\\
		& + \underbrace{\sum_{t=2T_0+1}^{T}(\overline{\gft}_t(q_{k^*})-\overline{\gft}_t(q_t))}_{(3)} + \underbrace{\sum_{t=2T_0+1}^{T}(\overline{\gft}_t(q_t)-\gft(q_t))}_{(4)}\\
		& = R_K + (1)+(2)+(3)+(4).
	\end{align*}
	In the following, we proceed to bound each component separately. 
	
	\paragraph{Bound $(1)$}:  It can be bounded trivially by $2T_0$ .
	
	\paragraph{Bound $R_K$:}
	With probability at least $1-\delta$ by Lemma~\ref{lemma: gridsbb} 
	\begin{equation*}
		R_K \le \BigOL{\frac{T}{K}+ \frac{T}{\sqrt{T_0}}}.
	\end{equation*}

	\paragraph{Bound $(2)$:}
	To bound $(2)$ we first prove that by construction with probability at least $1-2\delta$ the estimated game for trade $\overline{\gft}_t(q)$ is greater or equal than $\gft(q)$ for all episode $t\in [2T_0+1,\ldots,T]$ and for all $q\in \mathcal{B}$.
	
	First observe that $\widehat{\gft}_1$ is such that 
	\begin{align*}
		\mathbb{E}[\widehat{\gft}_1(q_k)]& =  \mathbb{E}\left[ \frac{1}{T_0}\sum_{t=T_0+1}^{2T_0}\mathbb{I}(s_t\le U_t \le \max\{q_k,\underline{b}_t\})\mathbb{I}(\bar{b}_t \ge q_k)\right]\\
		& = \gft_1(q_k)
	\end{align*}
	 using Lemma~\ref{lemma: gft decomposition}.
	
	In addition, it holds that  $\lvert\mathbb{I}(s_t\le U_t \le \max\{q_k,\underline{b}_t\})\mathbb{I}(\bar{b}_t \ge q_k)\rvert\le 1$.
	
	Thus by Hoeffding inequality with probability at least $1-\delta$ for all $q_k\in \mathcal{B}$:
	\begin{equation}
		\label{eq:AH gft1}
		\bigg\lvert \widehat{\gft}_1(q_k) -  \gft_1(q_k)\bigg\rvert \le 2\sqrt{\frac{\ln(\frac{2K}{\delta})}{2T_0}},
	\end{equation}
	and therefore with probability at least $1-\delta$:
	\begin{equation*}
		\overline{\gft}_1(q_k)= \widehat{\gft}_1(q_k) + 2\sqrt{\frac{\ln(\frac{2K}{\delta})}{2T_0}} \ge \gft_1(q_k).
	\end{equation*}
	For what concern $\gft_2$ observe that $\widehat{\gft}_{t,2}$ is, for all $q_k$ in $\mathcal{B}$ and for all $t\in [2T_0+1,T]$, defined as
	\begin{equation*}
		\widehat{\gft}_{t,2}(q_k)= \frac{1}{n_t(q_k)}\sum_{t=2T_0+1}^t (-\max\{q_k,\underline{b}_t\}+\bar{b_t})\mathbb{I}(\bar{b}_t\ge q_k)\mathbb{I}(s_t\le \max\{\underline{b}_t,q_k\}).
	\end{equation*}
	Thus, by applying also in this case the Hoeffding inequality for all $q_k\in \mathcal{B}$, $t\in [2T_0+1,\ldots,T]$ with probability at least $1-\delta$ it holds:
	\begin{equation}
		\label{eq:AH gft2}
		\bigg\lvert \widehat{\gft}_{2,t}(q_k) -  \gft_2(q_k)\bigg\rvert \le 2\sqrt{\frac{\ln(\frac{2KT}{\delta})}{2n_t(q_k)}},
	\end{equation}
	and therefore
	\begin{equation*}
		\overline{\gft}_{t,2}(q_k)= \widehat{\gft}_{2,t}(q_k) + \sqrt{\frac{\ln(\frac{2KT}{\delta})}{2n_t(q_k)}} \ge \gft_2(q_k).
	\end{equation*}
	Thus, with probability at least $1-2\delta$ 
	\begin{align*}
		(2)\le 0.
	\end{align*}
	\paragraph{Bound $(3)$:}
	By construction of the algorithm $(3)\le 0$.
	\paragraph{Bound $(4)$:} To bound $(4)$ we first observe that by Inequality \eqref{eq:AH gft1} with probability at least $1-\delta$ for all $q\in \mathcal{B}$
	\begin{equation*}
		\overline{\gft}_1(q_k)-\gft_1(q_k)\le 2 \left(2\sqrt{\frac{\ln(\frac{2K}{\delta})}{2T_0}}\right),
	\end{equation*}
	and by Inequality \eqref{eq:AH gft2} it holds with probability at least $1-\delta$ for all $q_k\in \mathcal{B}$ and for all $t \in [2T_0+1,T]$ 
	\begin{equation*}
		\overline{\gft}_{t,2}(q_k)-\gft_2(q_k)\le 2 \left(2\sqrt{\frac{\ln(\frac{2KT}{\delta})}{2n_t(q_k)}}\right),
	\end{equation*}
    where $n_t(q_k)$ is the counter of in how many episodes from $2T_0+1$ to $t$ the algorithm played the price $q_k$.
	Therefore, with probability at least $1-2\delta$
	\begin{align*}
		(4)& = \sum_{t=2T_0+1}^{T}(\overline{\gft}_t(q_t)-\gft(q_t))\\
		& \le 4(T-2T_0)\sqrt{\frac{\ln(\frac{4KT}{\delta})}{2T_0}} + 4\sum_{t=2T_0+1}^T\sqrt{\frac{\ln(\frac{4KT}{\delta})}{n_t(q_t)}}\\
		& = \BigOL{\frac{T}{\sqrt{T_0}}+\sqrt{KT}}.
	\end{align*}
	
	To conclude the proof we join all the results and we observe that with probability at least $1-3\delta$
	\begin{align*}
		R_t &= R_K + (1)+(2)+(3)+(4)\\
		& \le \BigOL{\frac{T}{K}+\frac{T}{\sqrt{T_0}}}+ \BigOL{T_0} + \BigOL{\frac{T}{\sqrt{T_0}}+\sqrt{KT}}\\
		& \le \BigOL{\frac{T}{K}+\frac{T}{\sqrt{T_0}}+ T_0+ \sqrt{KT}}\\
		& = \BigOL{T^{2/3}},
	\end{align*}
	where the last equality holds if we fix $K=\BigOL{T^\frac{1}{3}}$, $T_0=\BigOL{T^\frac{2}{3}}$.
	
\end{proof}

\section{Proofs Omitted from \Cref{sec:feedback}}
\theoremGMAB*
\begin{proof}
The proof can be divided in two parts; in the first we prove that the optimal arm belongs with high probability to the safe set at each episode and in the second we actually bound the regret between the rewards of the arms chosen and an arbitrary arm in the safe set.

    \paragraph{Optimal arm belongs to $\bigcap_{t\in[T]}\mathcal{S}_t$}
    First we prove the following two auxiliary results:
     \begin{enumerate}
        \item  with probability at least $1-\delta$ for all $t\in [T]$ and for all $a_{i,j}\in \mathcal{A}$ 
    \begin{equation*}
        r(a_{i,j}) \in [\underline{r}_t(a_{i,j}),\overline{r}_t(a_{i,j})].
    \end{equation*}
    Since $\mathbb{E}[\widehat{r}_t(a_{i,j})]=\mathbb{E}[\frac{1}{n_t(a_{i,j})}\sum_{\ell=1}^t r_\ell(a_{i,j})\mathbb{I}(a_{i,j}\in Obs_\ell)]=\frac{n_t(a_{i,j})r(a_{i,j})}{n_t(a_{i,j})}= r(a_{i,j})$, where $n_t(a_{i,j})$ is the number of times the reward of arm $a_{i,j}$ is observed up to episode $t$ and where $\mathbb{I}(a_{i,j}\in Obs_\ell)=1$ means that at episode $\ell$, as consequence of playing arm $a_{i_\ell,j_\ell} $ the reward of arm $a_{i,j}$ is also disclosed, and since $|r_\ell(a_{i,j})\mathbb{I}(a_{i,j}\in Obs_\ell)|\le 1$ for all $\ell\in [T]$, then by Hoeffding inequality with probability at least $1-\delta$
    \begin{equation*}
       \bigg\lvert \widehat{r}_t(a_{i,j})-r(a_{i,j})\bigg \rvert \le 2\sqrt{\frac{\ln(\frac{2nmT}{\delta})}{2n_t(a_{i,j})}}.
    \end{equation*}
   
    \item With probability at least $1-\delta$ for all $t\in [T]$ and for all $(a_{i,j})\in \mathcal{A}$ 
    \begin{equation*}
        c(a_{i,j}) \in [\underline{c}_t(a_{i,j}),\overline{c}_t(a_{i,j})].
    \end{equation*}
    We prove this point in an analogous way. By means of Hoeffding inequality, recalling the definition of $\widehat{c}_t(a_{i,j})=\frac{1}{n_t(a_{i,j})}\sum_{\ell=1}^t c_\ell(a_{i,j})\mathbb{I}(a_{i,j}\in Obs_\ell)]$, it holds with probability at least $1-\delta$ for all $t\in [T]$ and for all $a_{i,j}\in \mathcal{A}$
    \begin{equation*}
       \bigg\lvert \widehat{c}_t(a_{i,j})-c(a_{i,j})\bigg \rvert \le 2\sqrt{\frac{\ln(\frac{2nmT}{\delta})}{2n_t(a_{i,j})}}.
    \end{equation*}
    \end{enumerate}
   Finally, we are ready to prove that $a_{i^*,j^*}\in \mathcal{S}_t$ for all $t\in[T]$. Indeed there are two possible mechanism of arms elimination.
   
   The first is straightforward and consists in eliminating all arms for which it exists a $t$ such that $\underline{c}_t(a_{i,j})\ge 0$, which, with probability at least $1-\delta$, does not eliminate $a_{i^*,j^*}$ since by definition $c(a_{i^*,j^*})\le 0$.
   
   The second method is less obvious and eliminate only arms $a_{i,j'} $ for which it exist $t\in[T]$ and $a_{i,j}\in \mathcal{S}_t$ such that $\overline{r}(a_{i,j'})\le \underline{r}(a_{i,j})$ and $j'\le j$. We distinguish two possible cases, either $c(a_{i,j})\le 0$ or $c(a_{i,j})>0$.
   If $c(a_{i,j})\le 0$ then $a_{i,j}$ is feasible and $a_{i,j'}$ is with high probability sub-optimal, indeed with probability at least $1-\delta$ it holds $r(a_{i,j'})\le\overline{r}(a_{i,j'})\le \underline{r}(a_{i,j}) \le r(a_{i,j}) $.
   Otherwise, if $c(a_{i,j})> 0$ then $c(a_{i,j'})\ge c(a_{i,j})> 0$ and $a_{i,j'}$ is not feasible.
   \paragraph{Bounding regret}
   The regret can be decomposed in four parts:
   \begin{align*}
        \tau\cdot r(a_{i^*,j^*})-\sum_{t=1}^\tau r(a_{i_t,j_t})& = \tau\cdot r(a_{i^*,j^*})-\sum_{t=1}^\tau \overline{r}_{t-1}(a_{i^*,j^*})\\
        & + \sum_{t=1}^\tau (\overline{r}_{t-1}(a_{i^*,j^*})-\overline{r}_{t-1}(a_{i_t,\bar{j}_t}))\\
        & +  \sum_{t=1}^\tau (\overline{r}_{t-1}(a_{i_t,\bar{j}_t})-\overline{r}_{t-1}(a_{i_t,j_t}))\\
        &+  \sum_{t=1}^\tau (\overline{r}_{t-1}(a_{i_t,j_t})-r(a_{i_t,j_t})).
   \end{align*}
   The first term is with probability at least $1-\delta$ upper bounded by $0$ by construction of $\overline{r}$.
   The second term is also upper bounded by $0$ with probability at least $1-2\delta$ as $a_{i_t,\bar{j}_t}$ is the arm that maximize $\overline{r}_{t-1}$ in $\mathcal{S}_{t-1}$ and as we proved above $a_{i^*,j^*}$ is in $\mathcal{S}_{t-1}$ with probability at least $1-2\delta$.
   The third term can be further decomposed as 
   \begin{align*}
       \sum_{t=1}^\tau (\overline{r}_{t-1}(a_{i_t,\bar{j}_t})-\overline{r}_{t-1}(a_{i_t,j_t}))& = \sum_{t=1}^\tau \left(\underline{r}_{t-1}(a_{i_t,\bar{j}_t})+ 4\sqrt{\frac{\ln(\frac{2nmT}{\delta})}{2n_{t-1}(a_{i_t,\bar{j}_t})}} -\overline{r}_{t-1}(a_{i_t,j_t})\right)\\
       & \le \sum_{t=1}^\tau 4\sqrt{\frac{\ln(\frac{2nmT}{\delta})}{2n_{t-1}(a_{i_t,\bar{j}_t})}}\\
       & = \sum_{t=1}^\tau 4\sqrt{\frac{\ln(\frac{2nmT}{\delta})}{2n_{t-1}(i_t)}}\\
       & = \BigOL{\sqrt{\tau n \log(nmT/\delta)}},
   \end{align*}
   since by construction $n_{t-1}(a_{i_t,\bar{j}_t})=n_{t-1}(a_{i_t,j'})$ for all $a_{i_t,j'} \in \mathcal{S}_{t-1}$, and we indicated that counter as simply $n_{t-1}(i_t)$.
   
   The fourth term can be treated similarly, hence with probability at least $1-\delta$
   \begin{align*}
       \sum_{t=1}^\tau (\overline{r}_{t-1}(a_{i_t,j_t})-r(a_{i_t,j_t}))& = \sum_{t=1}^\tau \left(\underline{r}_{t-1}(a_{i_t,j_t})+ 4\sqrt{\frac{\ln(\frac{2nmT}{\delta})}{2n_{t-1}(a_{i_t,j_t})}} -r(a_{i_t,j_t})\right)\\
       & \le \sum_{t=1}^\tau \left(4\sqrt{\frac{\ln(\frac{2nmT}{\delta})}{2n_{t-1}(a_{i_t,j_t})}}\right)\\
       & = \sum_{t=1}^\tau \left(4\sqrt{\frac{\ln(\frac{2nmT}{\delta})}{2n_{t-1}(i_t)}}\right)\\
       &= \BigOL{\sqrt{\tau n \log(nmT/\delta)}}.
   \end{align*}
   Thus, joining all the bounds we get that with probability at least $1-2\delta$ for all $\tau\in [T]$ it holds
   \begin{equation*}
       \tau\cdot r(a_{i^*,j^*})-\sum_{t=1}^\tau r(a_{i_t,j_t}) \le \BigOL{\sqrt{\tau n \log(nmT/\delta)}}.
   \end{equation*}
   \paragraph{Costs constraints}
   To conclude the proof we observe that with probability at least $1-\delta$ for all $\tau\in [T]$
   \begin{align*}
       \sum_{t=1}^\tau [c(a_{i_t,j_t})]^+& \le \sum_{t=1}^\tau [\overline{c}_{t-1}(a_{i_t,j_t})]^+\\
       & =  \sum_{t=1}^\tau \left[\left(\underline{c}_{t-1}(a_{i_t,j_t}) + 4\sqrt{\frac{\ln(\frac{2nmT}{\delta})}{2n_{t-1}(i_t)}}\right)\right]^+\\
       & \le \sum_{t=1}^\tau 4\sqrt{\frac{\ln(\frac{2nmT}{\delta})}{2n_{t-1}(i_t)}}\\
       & = \BigOL{\sqrt{\tau n \log(nmT/\delta)}}.
   \end{align*}
   
\end{proof}

\section{Proofs Omitted from  \Cref{sec:GBBalgorithm}}\label{app:GBB}

\gbbtheo*
\begin{proof}
	We first prove the statement relative to the regret.
	\begin{align*}
		R_T & = T \cdot \gft(p^*,q^*)-\sum_{t=1}^T\mathbb{E}[\gft(p_t,q_t)]\\
		& = \underbrace{T_0\cdot\gft(p^*,q^*)-\sum_{t=1}^{T_0}\mathbb{E}[\gft(p_t,q_t)]}_{(1)} + \underbrace{(T-T_0)\gft(p^*,q^*)-(T-T_0)\gft(p_{k^*},q_{j^*})}_{R_K} \\
		&+ \underbrace{(\tau-T_0) \gft(p_{k^*},q_{j^*})- \sum_{t=T_0+1}^\tau \mathbb{E}[\gft(p_t,q_t)]}_{R_{rev}} \\
		&+ \underbrace{T_0\cdot\gft(p_{k^*},q_{j^*})-\sum_{t=\tau+1}^{\tau+T_0}\mathbb{E}[\gft(p_t,q_t)]}_{(2)}\\
		& + \underbrace{(T-\tau-T_0){\gft}_t(p_{k^*},q_{j^*})-\sum_{t=\tau+T_0+1}^T\mathbb{E}\left[{\gft}_t(p_t,q_t)\right]}_{(3)} \\
		& = R_K + R_{rev}+(1)+ (2)+(3).
	\end{align*}
	\paragraph{Bound $R_K$:}
	With probability at least $1-\delta$ by Lemma~\ref{lemma: gridgbb} if $\mathcal{P}$ is bounded and/or sellers and buyers valuations are independent
	\begin{equation*}
		R_K \le \BigOL{\frac{T}{K}+\frac{T}{\sqrt{T_0}}}= \BigOL{T^{2/3}}.
	\end{equation*}
	\paragraph{Bound $R_{rev}$:}
	By Lemma~\ref{lemma: maxrevhedge} with probability at least $1-2\delta$
	\begin{equation*}
		R_{\rev}\le \BigOL{\beta + \tau\zeta+ T^{2/3}}= \BigOL{T^{2/3}}.
	\end{equation*}
	\paragraph{Bound $(1)$:}
	$(1)$ can be bounded trivially by $T_0$
    \[(1) \le T_0 = \BigOL{T^{2/3}}.\]
	\paragraph{Bound $(2)$:}
	$(2)$ can be also bounded trivially by $T_0$
    \[(2) \le T_0 = \BigOL{T^{2/3}}.\]
	\paragraph{Bound $(3)$:}
	
	If the distributions of sellers and buyers are independent and/or $\mathcal{P}$ is bounded it exists a choice of $\bar\zeta= \BigOL{T^{-1/3}}$ for which with probability at least $1-\delta$ $\rev(q_{k^*},q_{j^*})\ge -\bar\zeta$ by Lemma~\ref{lemma: -revk}. If the algorithm is initiated with that $\bar\zeta$ then it holds with probability at least $1-\delta$ that $c(p_{k^*},q_{j^*})=\mathbb{E}[c_t(p_{k^*},q_{j^*})]=\mathbb{E}[\frac{1}{\bar\zeta+1}(-\rev_t(p_{k^*},q_{j^*})-\bar\zeta)]=\frac{1}{\bar\zeta+1}(-\rev(p_{k^*},q_{j^*})-\bar\zeta)\le 0$.
    Hence, we can apply \cref{theoremGMAB} to bound $(3)$ and with probability at least $1-3\delta$ it holds 
    \begin{align*}
        (3) & = (T-\tau-T_0){\gft}(p_{k^*},q_{j^*})-\sum_{t=\tau+T_0+1}^T{\gft}(p_t,q_t)\\
        & =  \sum_{t=\tau+T_0+1}^T \left({\gft}_1(p_{k^*},q_{j^*})-\gft_2(p_{k^*},q_{j^*})\right)-\sum_{t=\tau+T_0+1}^T\left({\gft}_1(p_t,q_t)-{\gft}_2(p_t,q_t)\right)\\
        & = \sum_{t=\tau+T_0+1}^T \left(\gft_1(p_{k^*},q_{j^*})- \overline{\gft}_1(p_{k^*},q_{j^*})\right) + \sum_{t=\tau+T_0+1}^T \left(\overline{\gft}_1(p_t,q_t)-\gft_1(p_t,q_t) \right)\\
        & \hspace{2cm} + 2\sum_{t=\tau+T_0+1}^T\left(\frac{1}{2}(\overline{\gft}_1(p_{k^*},q_{j^*})+ \gft_{2}(p_{k^*},q_{j^*}))-\frac{1}{2}(\overline{\gft}_1(p_t,q_t)+ \gft_{2}(p_t,q_t))\right)\\
        & \le 8\sqrt{\frac{\ln(2|\mathcal{B}|/\delta)}{2T_0}}(T-T_0-\tau) + 2\sum_{t=\tau+T_0+1}^T \left(r(p_{k^*},q_{j^*})-r(p_t,q_t)\right) \\
        & \le \BigOL{\frac{T}{\sqrt{T_0}}} + \BigOL{\sqrt{KT\log(K/\delta)}}\\
        & = \BigOL{T^{2/3}},
    \end{align*}
    where the first inequality holds by the application of Hoeffding inequality  for which with probability at least $1-\delta$ for all $(p,q)\in \mathcal{B}$
    \begin{align*}
        |\gft_1(p,q)-\widehat{\gft}_1(p,q)|& = \mathbb{E}[\gft_1(p,q)-\widehat{\gft}_1(p,q)]\\
        &\hspace{1cm}+ \bigg|(\gft_1(p,q)-\widehat{\gft}_1(p,q))-\mathbb{E}[\gft_1(p,q)-\widehat{\gft}_1(p,q)]\bigg|\\
        & \le \mathbb{E}\left[\frac{1}{T_0}\sum_{t=\tau+1}^{T_0} \mathbb{I}\left(s_t\le U_t \le p\right)\mathbb{I}(\overline{b}_t\ge q)-\mathbb{P}(s_t\le U_t \le p,\overline{b}_t\ge q)\right] \\
        &\hspace{1cm} + 4\sqrt{\frac{\ln(\frac{2|\mathcal{B}|}{\delta})}{2T_0}}\\
        & = 4\sqrt{\frac{\ln(\frac{2|\mathcal{B}|}{\delta})}{2T_0}},
    \end{align*}
    and therefore 
    \[\gft_1(p,q) \in \left[\underline{\gft}_1,\overline{\gft}_1\right]=\left[\widehat{\gft}_1-4\sqrt{\frac{\ln(\frac{2|\mathcal{B}|}{\delta})}{2T_0}},\widehat{\gft}_1+4\sqrt{\frac{\ln(\frac{2|\mathcal{B}|}{\delta})}{2T_0}}\right].\]

	We prove now the result relative to the revenue of the algorithm.
	First observe that applying Hoeffding inequality with probability at least $1-\delta$
    \[\sum_{t=\tau+T_0+1}^T \rev_t(p_t,q_t) \ge \sum_{t=\tau+T_0+1}^T \rev(p_t,q_t) - 2\sqrt{\frac{\ln(\frac{2}{\delta})}{2}(T-T_0-\tau)}.\]
    With probability at least $1-3\delta$ using 
    \Cref{theoremGMAB} and \Cref{lemma: -revk}

		\begin{align*}
        \sum_{t=1}^T\rev_t(p_t,q_t)& = \sum_{t=1}^{T_0}\rev_t(p_t,q_t)  + \sum_{t=T_0+1}^{\tau}\rev_t(p_t,q_t) +  \sum_{t=\tau+1}^{\tau+T_0}\rev_t(p_t,q_t) + \sum_{t=\tau+T_0+1}^{T}\rev_t(p_t,q_t)\\
		& \ge -2T_0 + \beta + \sum_{t=\tau+T_0+1}^{T}\rev_t(p_t,q_t)\\
        & \ge -2T_0 + \beta + \sum_{t=\tau+T_0+1}^{T}\rev(p_t,q_t) - 2\sqrt{\frac{\ln(\frac{2}{\delta})}{2}(T-T_0-\tau)}\\
        & = -2T_0 + \beta  -\BigOL{\sqrt{T}} - (1+\bar\zeta)\sum_{t=\tau+T_0+1}^{T}\frac{1}{1+\bar{\zeta}}(-\rev(p_t,q_t)-\bar\zeta) + (T-\tau-T_0)\bar\zeta\\
        & \ge-2T_0 + \beta  -\BigOL{\sqrt{T}} - 2 \sum_{t=\tau+T_0+1}^{T} c(p_t,q_t) + \BigOL{T\bar\zeta}\\
		& \ge  \beta - \BigOL{T_0 + T\bar\zeta +\sqrt{KT}\log(K/\delta)} \\
        & \ge 0,
	\end{align*}
	where the last holds for some choice of $\beta=\BigOL{T^{2/3}}$.
	
\end{proof}

\end{document}